\documentclass[11pt,a4paper]{article}

\usepackage[utf8]{inputenc}
\usepackage[T1]{fontenc}
\usepackage[english]{babel}
\usepackage{lmodern}
\usepackage[a4paper,margin=2.5cm]{geometry}
\usepackage{microtype}
\usepackage{amsmath,amsthm,amssymb,mathtools}
\usepackage{booktabs}
\usepackage{enumitem}
\usepackage{graphicx}
\usepackage{subcaption}

\graphicspath{{images/}}

\usepackage{xcolor}
\usepackage{hyperref}
\usepackage{tikz}
\usetikzlibrary{arrows.meta,positioning,calc,fit,backgrounds,decorations.pathreplacing,shapes.geometric}
\usepackage{authblk}

\hypersetup{
  colorlinks=true,
  linkcolor=blue!40!black,
  citecolor=blue!40!black,
  urlcolor=blue!50!black,
  pdftitle={One Hex reduction to rule them all: Quoridor, Maze Attack, Pinko Pallino and Blockade are PSPACE-complete},
  pdfauthor={Francesco Carboni, Daniele Muscillo}
}

\setlist[itemize]{topsep=3pt,itemsep=2pt,parsep=0pt}
\setlist[enumerate]{topsep=3pt,itemsep=2pt,parsep=0pt}

\newtheorem{theorem}{Theorem}[section]
\newtheorem{lemma}[theorem]{Lemma}

\newtheorem{corollary}[theorem]{Corollary}
\theoremstyle{definition}
\newtheorem{definition}[theorem]{Definition}

\newtheorem{remark}[theorem]{Remark}

\newtheorem{fact}[theorem]{Fact}

\newcommand{\QWIN}{\textsc{Win(Quoridor)}}
\newcommand{\GHex}{$G(\text{graph-Hex})$}
\newcommand{\Open}{\perp}
\newcommand{\gwidth}{\operatorname{width}}
\newcommand{\gheight}{\operatorname{height}}

\title{One Hex reduction to rule them all:\\
Quoridor, Maze Attack, Pinko Pallino and Blockade\\ are PSPACE-complete}

\author[1]{Francesco~Carboni\thanks{\texttt{carboni.2058986@studenti.uniroma1.it}}}
\author[1,2]{Daniele~Muscillo\thanks{\texttt{muscillo.2080466@studenti.uniroma1.it}}}

\affil[1]{Sapienza University of Rome}
\affil[2]{Scuola Ortogonale, Elicsir Foundation}

\date{June 17, 2026}

\begin{document}
\maketitle

\begin{abstract}
\textsc{Quoridor} is a popular award-winning board game whose computational complexity, listed among the open problems of the Demaine--Hearn survey \cite{demaine-hearn}, remained open for nearly two decades. It was settled only recently \cite{drop2026quoridor}, via a reduction from the formula game $G_{pos}$ tailored to \textsc{Quoridor}. We give a shorter and more general proof: a single reduction from Reisch's planar graph-Hex \cite{reisch1981}, in which wall placement encodes the path-connection structure of Hex. The same construction settles three closely related games---\textsc{Maze Attack} and \textsc{Pinko Pallino} with no change, and \textsc{Blockade} with only minor adaptations---showing that all four are PSPACE-complete, the latter three for the first time.
More generally, our reduction shows that any \emph{race-and-wall} game is PSPACE-complete.
\end{abstract}

\section{Introduction}

The systematic study of the computational complexity of games has been one of the most fruitful meeting points between recreational mathematics and theoretical computer science. Ever since Schaefer showed that a handful of innocent-looking two-player games are as hard as any problem solvable in polynomial space \cite{schaefer1978}, a long line of work has placed classical board games at the top of the polynomial hierarchy: generalized \emph{Hex} and the \emph{Shannon switching game} \cite{eventarjan1976,reisch1981}, \emph{Gomoku} \cite{reisch1980gobang}, \emph{Go} \cite{lichtensteinsipser1980}, \emph{Amazons} \cite{furtak2005amazons}, and connection games in general \cite{bonnet2016connection} are all PSPACE-complete on boards of arbitrary size. Games whose play is not bounded in advance can be even harder: generalized \emph{chess} and \emph{checkers} are EXPTIME-complete \cite{fraenkel1981chess,robson1984checkers}. This boundary is captured cleanly by Hearn and Demaine's constraint-logic framework \cite{hearndemaine2009}, in which bounded two-player games are PSPACE-complete while unbounded ones are EXPTIME-complete. The race-and-wall games we study fall on the bounded side: once a wall is placed, it cannot be moved or removed, and hence the number of distinct wall configurations is finite. This is the key reason why the corresponding decision problems lie in PSPACE. The survey of Demaine and Hearn \cite{demaine-hearn} collects many of these results and, tellingly, also records the games whose complexity was still open---\textsc{Quoridor} among them.

\textsc{Quoridor} is among the most popular abstract strategy games of the last three decades. Designed by Mirko Marchesi and published by Gigamic in 1997, it won the Mensa Select award that same year and has since become a widely sold, award-winning title, with a particular following in the United States. Despite this visibility, its computational complexity---listed as open in \cite{demaine-hearn}---remained unresolved for nearly two decades.

\textsc{Quoridor} does not stand alone but belongs to a small family of two-player, perfect-information \emph{race-and-wall} games, in which players move pawns across a grid while dropping walls to obstruct the opponent. The oldest member is \textsc{Blockade}, marketed in Europe as \emph{Cul-de-sac}, created by Philip Slater and published by Lakeside in 1975 \cite{blockade1975}. Much of its mechanics resurface in \textsc{Pinko Pallino} and, later, in \textsc{Quoridor} itself, both attributed to Marchesi; \textsc{Maze Attack} is a more recent jump-less variant. The four games thus share a common ancestry and a common core mechanic, but differ in movement rules, wall types, and the move/place turn structure---differences that, as we show, do not affect their worst-case complexity.

\paragraph{Related work.}
The complexity of \textsc{Quoridor} was settled only very recently. Concurrently with and independently of the present work, Drop, Rin, and van der Velde~\cite{drop2026quoridor}---whose preprint appeared shortly before ours---proved \textsc{Quoridor} PSPACE-complete by a reduction from the formula game $G_{pos}$. Their argument, however, relies on \textsc{Quoridor}-specific pawn-interaction primitives, such as \emph{railroading}, which depend on the jump rule and on purely orthogonal movement. For this reason, it does not directly transfer to the related games considered here. Our approach is instead closer in spirit to the connection-game tradition, since it reduces from Reisch's planar graph-Hex~\cite{reisch1981} and uses only wall gadgets, making the construction robust under the rule variations of \textsc{Maze Attack}, \textsc{Pinko Pallino}, and \textsc{Blockade}.
More broadly, the complexity of concrete board games has continued to attract attention, with recent results settling \emph{Arimaa} \cite{rin2024arimaa} and \emph{Hive} \cite{andel2026hive}. Particularly close in spirit is the escape game \emph{Nemesis} of Bergé, Dailly, and Gerard \cite{berge2026nemesis}, in which a fugitive tries to reach an exit while an adversary deletes one edge per round, and which is again shown to be PSPACE-complete. Such pursuit-and-evasion games on graphs are conceptually adjacent to the race-and-wall games studied here, where one player races to a target side while the other erects walls to seal off every route.

\paragraph{Our contribution.}
We give a \emph{single} reduction from planar graph-Hex that settles four games at once. The construction never depends on the specific mechanics of any one game---jumps, diagonal moves, double pawns, or directional walls---so it applies verbatim to \textsc{Quoridor}, \textsc{Maze Attack}, and \textsc{Pinko Pallino}, and extends to \textsc{Blockade} with only the minor adaptations its second pawn and directional walls require. This gives a new, shorter, and arguably more elegant proof for \textsc{Quoridor} and the first proof for the other three.

The rest of the paper is organized as follows. Section~\ref{sec:games} describes the four games, their shared rules, and the parametric generalization on which the decision problems are based; Section~\ref{sec:pgh} recalls the planar graph-Hex problem we reduce from; Section~\ref{sec:embedding} presents the gadgets; and Section~\ref{sec:reduction} gives the reduction itself.

\section{The games and their generalization}
\label{sec:games}

\subsection{The four games}
\label{sec:fourgames}

\paragraph{The common framework.}
The four games we study are two-player, perfect-information, deterministic \emph{race-and-wall} games: each player controls one or more pawns that start on one side of a rectangular grid, and the goal is to be the first to bring a pawn to the opposite side. The players move in alternating turns. Besides moving pawns, a player may drop \emph{walls} on the grid lines to obstruct the opponent: a wall has costant length (two for these four games), blocks pawn movement across it, and may neither cross nor overlap another wall. In all four games, once a wall has been placed it is \emph{permanent}: it can never be removed nor moved for the rest of the game. This monotonicity is fundamental to our reduction, since it turns every wall placement into an irreversible commitment, mirroring the irreversible colouring of a vertex in Hex.\\
A second legality constraint is likewise shared by all four games: a wall may be placed only if it does not remove the last remaining path of any pawn to its target side. Walls can therefore lengthen a route arbitrarily but can never seal it off completely. However, we will consider this rule not to be assumed in general in \emph{race-and-wall} games.  The original games are played on fixed boards (\textsc{Quoridor} on $9\times 9$, Pinko Pallino on $11\times 11$, Blockade on $11\times 14$); for the complexity analysis these sizes are generalized to arbitrary $n$, as detailed in the next subsection.

\paragraph{Axes of variation.}
Within this common framework, the four games differ only along a few independent axes:
\begin{itemize}
\item \emph{Pawns per player:} one (\textsc{Quoridor}, \textsc{Maze Attack}, \textsc{Pinko Pallino}) or two (\textsc{Blockade}).
\item \emph{Pawn movement:} a single orthogonal step (\textsc{Quoridor}, \textsc{Maze Attack}); a single orthogonal or diagonal step (\textsc{Pinko Pallino}); or two orthogonal squares or one diagonal square (\textsc{Blockade}). See Figure~\ref{fig:moves}.
\item \emph{Turn structure:} either move a pawn or place a wall (\textsc{Quoridor}, \textsc{Maze Attack}, \textsc{Pinko Pallino}), versus move a pawn and then place a wall when one is available (\textsc{Blockade}).
\item \emph{Pawn confrontation:} when two pawns meet, \textsc{Quoridor} allows a jump over the opponent, with a lateral deviation if the landing square is blocked; \textsc{Maze Attack} forbids jumps, so the pawns block one another; \textsc{Blockade} again allows jumps. We were unable to find the confrontation rule of \textsc{Pinko Pallino} in the available sources; as we will see, however, our reduction works for any choice of confrontation rule. See Figure~\ref{fig:jumprules}.
\item \emph{Wall types:} a universal wall type that can be placed either horizontally or vertically (\textsc{Quoridor}, \textsc{Maze Attack}, \textsc{Pinko Pallino}), versus two distinct wall types, one restricted to vertical and one to horizontal placement (\textsc{Blockade}).
\end{itemize}

\begin{figure}[ht]
    \centering
    \begin{subfigure}[t]{0.45\textwidth}
        \centering
        \includegraphics[width=0.48\linewidth]{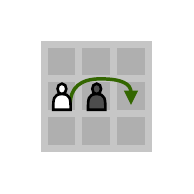}
        \hfill
        \includegraphics[width=0.48\linewidth]{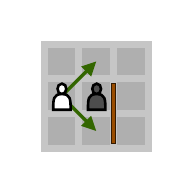}
        \caption{\textsc{Quoridor}}
        \label{fig:conf-quoridor}
    \end{subfigure}
    \hfill
    \begin{subfigure}[t]{0.22\textwidth}
        \centering
        \includegraphics[width=\linewidth]{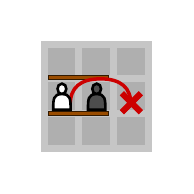}
        \caption{\textsc{Maze Attack}}
        \label{fig:conf-maze}
    \end{subfigure}
    \hfill
    \begin{subfigure}[t]{0.22\textwidth}
        \centering
        \includegraphics[width=\linewidth]{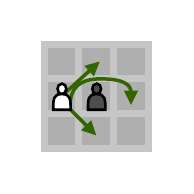}
        \caption{\textsc{Blockade}}
        \label{fig:conf-blockade}
    \end{subfigure}
    \caption{Pawn-confrontation rules across the games.}
    \label{fig:jumprules}
\end{figure}

\begin{figure}[h]
    \centering
    \begin{subfigure}[t]{0.22\textwidth}
        \centering
        \includegraphics[width=\linewidth]{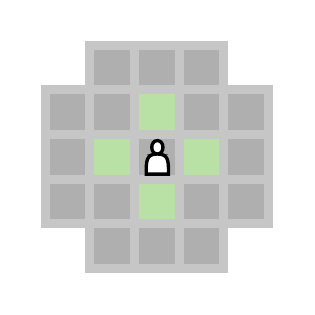}
        \caption{\textsc{Maze Attack} and \textsc{Quoridor}.}
    \end{subfigure}
    \hfill
    \begin{subfigure}[t]{0.22\textwidth}
        \centering
        \includegraphics[width=\linewidth]{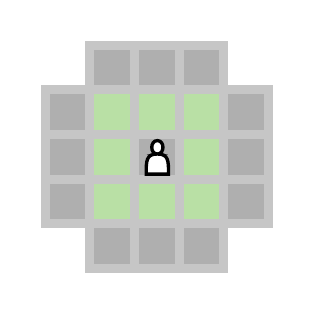}
        \caption{\textsc{Pinko Pallino}.}
    \end{subfigure}
    \hfill
    \begin{subfigure}[t]{0.22\textwidth}
        \centering
        \includegraphics[width=\linewidth]{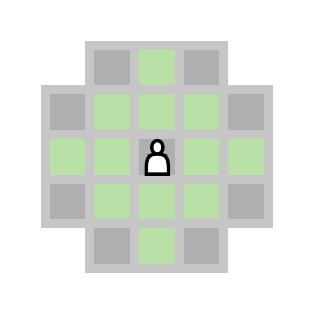}
        \caption{\textsc{Blockade}.}
    \end{subfigure}
    \hfill
    \caption{Allowed pawn movements across the games.}
    \label{fig:moves}
\end{figure}

\begin{table}[h]
\centering
\renewcommand{\arraystretch}{1.25}
\begin{tabular}{@{}lcccc@{}}
\toprule
 & \textsc{Quoridor} & \textsc{Maze Attack} & \textsc{Pinko Pallino} & \textsc{Blockade} \\
\midrule
Pawns per player & $1$ & $1$ & $1$ & $2$ \\
Pawn movement    & $1$ orth. & $1$ orth. & $1$ orth./diag. & $2$ orth.\ or $1$ diag. \\
Turn             & move or place & move or place & move or place & move and place \\
Confrontation    & jump ($+$ deviation) & no jump (block) & ? & jump \\
Wall types       & one & one & one & two (vert., horiz.) \\
\bottomrule
\end{tabular}
\caption{The four games as combinations of independent rule parameters. All of them share the same wall mechanics and the non-blocking constraint; only the listed axes differ.}
\label{tab:games}
\end{table}

In Table \ref{tab:games}, the four games read off compactly. \textsc{Quoridor} is one pawn, move-or-place, a single orthogonal step, the jump-with-deviation rule, and a universal wall type.\\ \textsc{Maze Attack} is identical, except that pawn confrontation is blocking rather than jumping. \textsc{Pinko Pallino} is identical to \textsc{Quoridor}, except that pawns may also step diagonally. \textsc{Blockade} is two pawns, move-and-place, a two-orthogonal-or-one-diagonal step with jumps, and two directional wall types.

Crucially, our reduction is built entirely from features shared by all four games---the race to the opposite side and the length-two walls---which is why the single construction of Section~\ref{sec:reduction} serves them all.

\subsection{Generalization and the decision problem}
\label{sec:generalization}
For complexity purposes each game is generalized from its fixed commercial board to an arbitrary $m\times n$ board, with the number of walls available to each player linearly bounded in $n+m$. The four games share the same notion of position and differ only in a few rule switches, so we define a single decision problem and let the game itself be a parameter, rather than repeating four almost identical encodings.

\paragraph{Race-and-wall position (common to all four games).}
A \emph{race-and-wall position} consists of an $m\times n$ board possibly already containing walls, together with
\begin{itemize}
\item the positions of the pawns, $p$ per player (since \textsc{Blockade} requires two pawns per player);
\item the number of walls each player still has available, possibly split into vertical and horizontal ones (since \textsc{Blockade} has two kinds of walls);
\item the player to move, White or Black.
\end{itemize}
The wall mechanics---length-two walls, no crossing or overlapping, permanence, and the non-blocking constraint---are common to all games and are as described above. A position is \emph{valid} if the placed walls are pairwise non-crossing and every pawn can still reach its target side. A valid position is a \emph{valid configuration of} $\mathsf{Game}\in\{\textsc{Quoridor},\textsc{Maze Attack},\textsc{Pinko Pallino},\textsc{Blockade}\}$ if it has the number of pawns and the wall types prescribed by $\mathsf{Game}$. 

\begin{definition}
For $\mathsf{Game}\in\{\textsc{Quoridor},\textsc{Maze Attack},\textsc{Pinko Pallino},\textsc{Blockade}\}$ we consider
$$
\textsc{Win}(\mathsf{Game})=
\left\{
\mathcal{B} :
\mathcal{B}\text{ is a \textit{valid} configuration of }\mathsf{Game}\text{ and White has a winning strategy}
\right\}.
$$
\end{definition}

\section{Generalized Hex on planar graphs}
\label{sec:pgh}
The source problem for our reduction is not the ordinary $n\times n$ game of Hex, but the generalization of Hex to planar graphs introduced by Reisch on his way to proving that Hex is PSPACE-complete \cite{reisch1981}.\\
\\We recall it here in Reisch's own terms, because two features of his formulation---the planarity of the underlying graph and the degree bound on the uncolored vertices---are exactly what makes our embedding into a \emph{race-and-wall} board possible.

\subsection{Hex-graphs and the game graph-Hex}
Following Reisch, the game is played on a planar graph equipped with two distinguished terminals that lie on the outer face.
\begin{definition}[Hex-graph and outside pair]
Let $H=(V,E)$ be a finite undirected planar graph and let $s,t\in V$ be two distinct vertices. We say that $s$ and $t$ form an \emph{outside pair} if the graph $(V,E\cup\{st\})$ obtained by adding the edge $st$ is still planar; equivalently, $H$ admits a planar embedding in which both $s$ and $t$ are accessible from the outer face. A planar graph together with a selected outside pair $(s,t)$ is called a \emph{Hex-graph}.
\end{definition}
\noindent The terminals $s$ and $t$ are never played on: they merely mark the two endpoints that White must connect.
\begin{definition}[graph-Hex]
\emph{graph-Hex} is played on a Hex-graph $H=(V,E)$ with outside pair $(s,t)$. White and Black alternately place a stone of their colour on a vertex of $V\setminus\{s,t\}$, one per turn; a vertex may be played at most once, and a stone is never moved or removed. White moves first and \emph{wins} if at some point the white vertices contain a path from $s$ to $t$; otherwise Black wins.
\end{definition}
\noindent A \emph{position} (or play situation) of graph-Hex is a Hex-graph in which some vertices already carry a stone. We encode it by a partial colouring $\chi\colon V(H)\setminus\{s,t\}\to\{W,B,\Open\}$, where $\Open$ marks the \emph{open} (still-uncoloured) vertices; the terminals $s,t$ are never coloured and are not counted among the open vertices.

\subsection{The source language}
We reduce from the exact language that Reisch proves PSPACE-complete \cite{reisch1981}. Beyond requiring White to have a winning strategy, it imposes a degree restriction on the open vertices.
\begin{definition}
\GHex is the set of graph-Hex positions $\langle H,s,t,\chi\rangle$ such that
\begin{enumerate}
\item White has a winning strategy; 
\item every open vertex has degree at most $3$.
\end{enumerate}
\end{definition}
\noindent This is exactly Reisch's restriction: since the terminals are not open vertices, condition~(2) bounds the degree of every open (uncoloured) vertex, which is the property our reduction relies on.
\begin{theorem}[Reisch \cite{reisch1981}]\label{thm:reisch}
\GHex $\,$ is PSPACE-complete: deciding whether White has a winning strategy in a graph-Hex position is PSPACE-complete even under the restriction that every open vertex other than the two terminals has degree at most three.
\end{theorem}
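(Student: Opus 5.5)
Membership and hardness are handled separately; hardness is the substantive part and follows Reisch's route from a quantified Boolean formula game.

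\textbf{Membership in PSPACE.} Every move colours one previously open vertex. A play therefore lasts at most $|V|-2$ moves. We evaluate the game tree by depth-first search, with a recursion stack of depth at most $|V|$. Each frame stores the current partial colouring $\chi$ and an index over the possible moves. Deciding whether a leaf is a White win is an $s$--$t$ reachability question in the subgraph induced by $\{s,t\}$ and the white vertices, which takes polynomial time. Checking condition~(2) is a trivial syntactic test. Hence \GHex{} is in PSPACE.

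\textbf{Hardness.} We reduce from Schaefer's game $G_{\omega}(\mathrm{CNF})$, equivalently TQBF with alternating quantifiers $\exists x_1\forall x_2\cdots$ over a 3-CNF matrix. The construction is a planar Hex-graph whose outside pair $(s,t)$ lies on the outer face.

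The formula is laid out as a chain of variable gadgets in quantifier order, followed by a clause-checking region.

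A \emph{variable gadget} for an existential variable is a small bridge. White, moving first inside the gadget, chooses one of two parallel branches. This choice corresponds to $x_i$ or $\neg x_i$, and the chosen branch continues the white $s$--$t$ chain. For a universal variable the gadget is arranged so that Black effectively chooses the branch and White is forced to answer.

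The design target for every gadget is a \emph{tempo} property: any move outside the intended response either loses immediately or is harmless. This is proved with local pairing strategies, in the style of Hex bridges, on the open vertices of each gadget.

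The clause region must allow White to complete a connection exactly when every clause contains a true literal. It is built from literal ``wires''. These wires carry the colour of the chosen branch forward to clause gadgets, which are consistent with Reisch's reduction.

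\textbf{Planarity and degree.} Wire crossings are removed with a planar crossover gadget. This gadget is exactly where Reisch's argument spends its effort, and it is the main obstacle I expect. Every open vertex of degree greater than $3$ is then split into a path of pre-coloured vertices. A vertex that must function as a single open choice is replaced by a small tree. In that tree, only the root is open, with degree at most $3$, and the remaining vertices are pre-coloured white, since positions may contain stones.

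\textbf{Correctness.} Correctness comes from two strategy-translation lemmas.
\begin{enumerate}
\item A winning strategy for the existential player in the formula game yields a White Hex strategy: White follows the gadget choices and answers every deviation with its paired vertex.
\item Conversely, a winning strategy for the universal player yields a Black blocking strategy. Black applies the same pairings and, after the assignment is fixed, cuts a falsified clause.
\end{enumerate}

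I would handle the crossover by adapting the classical Hex crossover to tolerate pre-coloured stones. The other option is to avoid crossings by using Lichtenstein's planar-formula variant of QBF: there the variable--clause incidence graph is already planar, so no crossover is needed at all.
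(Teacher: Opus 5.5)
Your membership argument is fine. The hardness half, however, is not a proof. Note first that the paper gives no proof of this theorem: it is imported from Reisch as a black box, so your sketch has to stand on its own, and it does not. The existential gadget, the universal gadget and the clause region are specified only by their intended behaviour. No graph is given, so your two strategy-translation lemmas have nothing to be proved about. In a Maker--Breaker connection game the entire difficulty sits in those gadgets: preventing double threats and tempo gains, forcing Black's universal choice to actually matter, and making literal connections survive. A chain of open vertices carries no colour, since Black severs it with one stone. Saying that the clause gadgets are ``consistent with Reisch's reduction'' simply defers the substance back to the citation.

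The two steps you do make concrete would fail.
\begin{itemize}
\item \emph{Degree reduction.} Pre-colouring an open vertex hands one player a free move, so it changes the game. The tree replacement cannot work either. Suppose an open vertex $v$ of degree $d\ge 4$ is replaced by an open root $r$ of degree at most $3$ with all other tree vertices white. Then the $d$ neighbours $u_1,\dots,u_d$ of $v$ are distributed over at most three slots at $r$. By pigeonhole, some white branch touches two of them, say $u_i$ and $u_j$. These are then joined through white vertices even after Black takes $r$, whereas a black $v$ separates them in the original game. The paper's Lemma~\ref{lem:degree} uses white junction trees only for vertices that are \emph{already} white, which is why it is harmless there. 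The degree bound on open vertices must instead be built into the gadgets themselves.
\item \emph{Crossovers.} In planar graph-Hex only White builds connections. In a planar gadget whose ports appear in cyclic order $N,E,S,W$ on its boundary, any white $N$--$S$ path and any white $E$--$W$ path must share a vertex. So there is no ``classical Hex crossover'' keeping two active wires apart; crossings must be avoided or made harmless by construction-specific arguments.
\item \emph{Planar QBF.} Using Lichtenstein's planar QBF is the more promising option. However, planarity of the variable--clause incidence graph does not by itself yield a planar Hex-graph. You must also embed the $s$--$t$ spine through the variable gadgets in quantifier order, the Black clause-selection structure, and the connections to $t$, all with $s,t$ on the outer face.
\end{itemize}
As it stands you should either cite Reisch, as the paper does, or supply these constructions explicitly and verify them.
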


\begin{lemma}[Determinacy of graph-Hex]\label{lem:determinacy}
graph-Hex is a finite, two-player, perfect-information game without draws such that every play halts
after at most $|V(H)|$ moves. Consequently, from any
position exactly one of the two players has a winning strategy.
\end{lemma}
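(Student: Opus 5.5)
The plan is to check directly the three properties in the statement (finiteness, bounded length, no draws) and then invoke Zermelo's theorem, proved by backward induction on the game tree.

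\emph{Bounded length.} Each move colours exactly one previously open vertex of $V(H)\setminus\{s,t\}$, and stones are never moved or removed. Hence the number of open vertices strictly decreases by one per move. A play starting from any position $\chi$ therefore lasts at most $|\chi^{-1}(\Open)|\le |V(H)|-2\le |V(H)|$ moves. At that point either White has already connected $s$ to $t$, or no legal move remains and the play halts. Each position also has only finitely many legal moves, one per open vertex, so the game tree rooted at any position is finite.

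\emph{No draws.} The definition assigns an outcome to every terminated play. White wins if at some point the white vertices contain an $s$--$t$ path, and otherwise Black wins. So every complete play, whether it stops early because White connected or runs until all vertices are coloured, has exactly one winner. I will note that we do not need the Hex-style argument that a full board forces a winner in a topological sense. The winning condition is defined as ``White connects, else Black'', so the outcome is exhaustive by definition. Perfect information and determinism are immediate, since the whole state is the colouring $\chi$ together with the player to move.

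\emph{Determinacy.} I will label the nodes of the finite game tree by induction on the number of open vertices, which is the remaining height. A terminal node is labelled with its winner. An internal node where player $P$ is to move is labelled $P$ if some child is labelled $P$, and otherwise it is labelled with the opponent. A straightforward induction shows that the label of a node is a player who has a winning strategy from it. The strategy is to move to a child with the same label, or to play arbitrarily when the label is the opponent's. The two players cannot both have winning strategies: playing them against each other yields a single play with a unique winner. Thus exactly one player wins from every position.

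I do not expect a real obstacle here. The only point needing care is the early-stopping convention, namely that a white $s$--$t$ path ends the play immediately. This is handled by treating such positions as terminal leaves in the induction.
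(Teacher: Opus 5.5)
Your proof is correct and follows the same route as the paper: one stone per vertex gives the $|V(H)|$ bound on play length, the ``White connects, else Black'' winning condition rules out draws by definition, and backward induction on the finite game tree gives determinacy. Your explicit labelling induction and your check that two winning strategies cannot coexist spell out details the paper leaves implicit.
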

\begin{proof}
At most one stone is ever placed on each vertex of $V(H)\setminus\{s,t\}$ and stones are
never removed, so every play halts after at most $|V(H)|$ moves. By the winning condition the
outcome of a halted play is decided---White wins exactly when an $s$--$t$ path of white
vertices exists, and Black wins in every other case---so no play ends in a draw. A finite,
perfect-information game without draws is determined: by backward induction on its finite game tree, we get that White has a winning strategy precisely if and only if Black has none.
\end{proof}

\section{Embedding graph-Hex into the board}
\label{sec:embedding}
We translate a graph-Hex instance into a board configuration using only walls. This is done in two stages. First, we fix a geometric layout of the graph $H$ on the board; then, we replace each vertex and each edge of this layout by a local \emph{gadget} made of walls. Since the construction uses only wall placements, the same translation is valid for all the game variants considered here. Throughout, the figures display the exact wall geometry of each gadget, while the text explains why the gadget behaves as required.

\paragraph{The layout.}
The layout is provided by a standard planar grid embedding.
\begin{fact}[Orthogonal grid embedding \cite{tamassiatollis1989}]\label{fact:ortho}
Every $n$-vertex planar graph of maximum degree at most four admits an \emph{orthogonal grid drawing} of area $O(n^2)$, computable in polynomial time: each vertex is placed on a distinct grid point and each edge is drawn as a path of axis-parallel grid segments, so that distinct edges meet only at shared endpoints and bend only at grid points.
\end{fact}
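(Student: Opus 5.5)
The plan is to build the drawing from a visibility representation and then collapse each vertex to a point. Let $H$ be the input planar graph, with $n$ vertices and maximum degree at most four. First compute a planar embedding of $H$ in linear time (Hopcroft--Tarjan). Then augment $H$ to a biconnected planar graph $H'$ by adding dummy edges inside faces. Dummy edges may push degrees above four. That is harmless: they are used only to obtain the representation below and are deleted at the end.

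Next, choose an edge $st$ on the outer face of $H'$, compute an $st$-numbering, and build a \emph{visibility representation} of $H'$ on an $O(n)\times O(n)$ grid. This is the Rosenstiehl--Tarjan / Tamassia--Tollis construction. Each vertex $v$ becomes a horizontal segment $\Gamma(v)$ on its own row (the row index is its $st$-number). Each edge $uv$ becomes a vertical segment on its own column joining $\Gamma(u)$ and $\Gamma(v)$, and all these segments are pairwise disjoint except at shared endpoints. Delete the dummy edges. Every vertex segment is then touched by at most four vertical edge segments, some arriving from below and some from above.

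The key step is to replace each segment $\Gamma(v)$ by a single grid point $p(v)$ and reroute the $\le 4$ incident edges into the four ports up, down, left and right of $p(v)$. Before doing so, refine the grid by a constant factor and insert one extra free row just above and one just below each vertex row. This keeps the area $O(n^2)$ and gives every vertex private horizontal tracks that no other edge uses.

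The port assignment is then a small case analysis on the numbers $(a,b)$ of edges arriving from below and from above, with $a+b\le 4$. In every case, take $p(v)$ to be a suitably chosen attachment column on $\Gamma(v)$ (for example, the median one, adjusted by case).
\begin{itemize}
\item The edge straight below $p(v)$ uses the down port, and the edge straight above uses the up port.
\item The remaining edges to the left of $p(v)$ run horizontally along the vertex row (or a private track) into the left port. Those to the right do the same into the right port.
\item If both remaining edges on one side compete for a single port, for instance when $(a,b)=(4,0)$, the outermost of them is bent onto the private free row above $v$ and enters through the up port. This adds $O(1)$ bends.
\end{itemize}
Because all rerouting happens inside the rows reserved for $v$ and within the horizontal extent of $\Gamma(v)$, edges of distinct vertices cannot cross. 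Edges of the same vertex are nested on one side, so they do not cross either.

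The main obstacle is this local rerouting. One must check that every combination $(a,b)$ admits a planar assignment to the four ports that respects the cyclic order of edges around $v$ fixed by the embedding, and that the private rows suffice to realise it without crossings. I would handle it by listing the cases up to reflection ($a\ge b$, and left--right symmetry) and drawing each one explicitly.

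All steps run in polynomial time: linear-time planarity testing and $st$-numbering, plus constant local work per vertex. The resulting drawing places vertices on distinct grid points and draws edges as axis-parallel paths bending only at grid points, with area $O(n^2)$, as claimed in the Fact.
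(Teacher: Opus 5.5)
Your proposal is correct and follows the same route as the paper, which gives no proof of its own and simply cites Tamassia--Tollis. Their construction is exactly yours: an $st$-numbering-based visibility representation of a biconnected augmentation, followed by replacing each vertex segment with $\le 4$ attachments by a single point, with the incident edges rerouted locally. The one part you leave unwritten, the per-vertex $(a,b)$ port-assignment check, is routine once you have your private rows, since inside the extent of $\Gamma(v)$ those rows are crossed only by $v$'s own edges.
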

\begin{lemma}[Degree reduction]\label{lem:degree}
Every graph-Hex position $\langle H,s,t,\chi\rangle$ can be transformed, in time polynomial in $|V(H)|$, into an equivalent graph-Hex position in which no vertex is coloured black, every vertex has degree at most three, and the two terminals lie on the outer face.
\end{lemma}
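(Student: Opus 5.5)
The transformation proceeds in three local, polynomial-time steps. After each step we check that the set of winning plays is unchanged, i.e.\ that White has a winning strategy in the new position if and only if she has one in the old. The terminals already lie on a common face because $(s,t)$ is an outside pair. Re-embedding $H\cup\{st\}$ so that the edge $st$ is on the outer face and then deleting $st$ puts both terminals on the outer face; we will keep every later modification local, so planarity is preserved and $s,t$ stay on the outer face.

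\textbf{Step 1: removing black stones.} A black vertex can never lie on a white $s$--$t$ path, and no further stone can be placed on it. Deleting all black vertices together with their incident edges therefore leaves the set of white winning paths unchanged. The set of playable vertices is also unchanged, so the game trees are isomorphic. Deleting vertices preserves planarity and leaves $s,t$ on the outer face.

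\textbf{Step 2: contracting white stones.} A white vertex behaves exactly like an edge-connector that is already owned by White. Contract every connected component of white vertices, together with any terminal adjacent to it, into a single vertex. If the component touches $s$ (resp.\ $t$), the merged vertex becomes $s$ (resp.\ $t$). If a single component touches both terminals, White has already won; we then output a fixed trivial winning instance, for example $s$ and $t$ joined by an edge. Contracting a connected subgraph preserves planarity, and terminals on the outer face remain there. Equivalence holds because white $s$--$t$ paths in the old graph correspond exactly to paths in the contracted graph. After this step, every non-terminal vertex is open and, by condition~(2) of \GHex, has degree at most $3$. The only high-degree vertices are now $s$, $t$, and any white blobs that were merged into them.

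\textbf{Step 3: the terminals.} This is the main obstacle. Contracted white components are absorbed into terminals, but the terminals themselves may have arbitrary degree, and they cannot simply be split with \emph{open} vertices, because Black could then play on those. I would replace a terminal of degree $d>3$ by a path $s=u_1,u_2,\dots,u_{d-2}$ and distribute the original incident edges along it, at most one or two per $u_i$, in the cyclic order of the embedding. This keeps the drawing planar and keeps $u_1=s$ on the outer face. The difficulty is that the $u_i$ with $i\ge 2$ must be pre-coloured white, which contradicts ``no black, all else open'' only if the lemma is read as forbidding white stones. If white stones are forbidden, I would instead put the copies of $s$ all on the outer face as a chain and argue that it suffices for White to connect to \emph{any} of them. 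Only if the later gadgets need a single terminal would I treat these white chain vertices as part of the terminal gadget itself. Either way, the degree of each new vertex is at most $3$, the construction is linear in size, and winning strategies transfer verbatim, since the game is played on exactly the same open vertices.
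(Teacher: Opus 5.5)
There is a gap in Step~2. You claim that after contraction every non-terminal vertex is open, but your contraction absorbs only the white components adjacent to a terminal. A white component touching neither $s$ nor $t$ becomes a single \emph{white} non-terminal vertex. Condition~(2) bounds only the degree of open vertices, so this vertex can have arbitrary degree; a single white vertex of degree $5$ away from both terminals is already a counterexample. Step~3 splits only the terminals, so your output can still contain a vertex of degree greater than three, which violates the statement.

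The repair uses your own tool. Replace every white vertex of degree $d>3$ by a path (or tree) of white degree-three vertices, distributing its edges in the cyclic order of the embedding. This preserves planarity and the winner, since a connected set of white vertices behaves like a single white vertex. This is exactly the paper's third step.

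The paper also handles terminals more simply. Instead of contracting white components into a terminal and splitting it in place, it attaches a new pendant terminal $s'$ to a high-degree $s$ and recolours $s$ white. Then every vertex of degree above three is white, one uniform splitting step handles them all, and the new terminals have degree one on the outer face. With this, your contraction step (and its trivial-instance special case) is unnecessary.

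Your worry about pre-coloured white vertices is also moot: the lemma forbids only black stones, and the paper's output contains white vertices too. So the fallback with several copies of $s$ is not needed. It would not even yield a graph-Hex position, which has a single pair of terminals.
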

\begin{proof}
Each of the three transformations below preserves the winner, hence yields an equivalent position.

\emph{Black vertices.} A vertex coloured black is permanently unusable by White, since a black stone is never removed; deleting it together with its incident edges therefore leaves the winner unchanged, and we may assume no vertex is coloured black.

\emph{Terminals.} If a terminal, say $s$, has degree greater than three, add a new vertex $s'$ joined to $s$ by a single edge, recolour $s$ white, and let $s'$ be the new terminal (and likewise for $t$). As $s'$ is adjacent only to the now-white $s$, a white path from $s'$ to $t'$ exists if and only if the original position has a white path from $s$ to $t$, so the winner is unchanged. Since $s$ and $t$ form an outside pair, $s'$ and $t'$ can be placed on the outer face, where they have degree one.

\emph{High-degree vertices.} After these steps every vertex of degree greater than three is coloured white. We replace each such white vertex by a tree of degree-three white junctions; as a connected set of white vertices behaves like a single vertex, this preserves the winner. A vertex of degree $d$ yields $O(d)$ junctions, and since $H$ is planar the degrees sum to $O(|V(H)|)$, so the transformation is polynomial and adds only linearly many vertices.
\end{proof}
By Lemma~\ref{lem:degree} we may assume that $H$ has no black vertex, has maximum degree three, and has its two terminals on the outer face. We fix an orthogonal drawing of $H$ as in Fact~\ref{fact:ortho}, whose bounding rectangle measures $w\times h$ \emph{grid units}, with $w,h=O(|V(H)|)$.

We realize the drawing on the board by replacing each grid edge with an \emph{edge gadget} (a corridor of walls), each grid vertex with a \emph{vertex gadget}, and by bending corridors wherever the drawn edge bends. The rest of this subsection describes the gadgets.

\subsection{Edge/corridor gadgets}
An edge gadget is a corridor one cell wide, flanked on both sides by short \emph{rungs}: length-two wall segments protruding into the cells adjacent to the channel. We distinguish edge gadgets by the position of these rungs. If the rungs on the two sides face each other at the same height, we call that portion of the corridor \emph{aligned}; if they alternate between the two sides, we call it \emph{staggered}. Both types appear because the two ports of a vertex gadget may demand different local configurations.

Since some connections require an aligned corridor and others a staggered one, Figure~\ref{fig:fix} shows a small \emph{edge-switch} gadget that converts one into the other; it is inserted on any corridor whose two endpoints demand opposite types. Moreover, because corridors are tiled by length-two walls, the distance between two ports occasionally has the wrong parity to be tiled exactly; in that single case one inserts a \emph{parity-fix} gadget, which absorbs a one-cell offset (illustrated in the embedding example of Figure~\ref{fig:embedding}). Both the edge-switch and the parity-fix are themselves non-blockable, for the same reason as a plain corridor.
\begin{lemma}[Corridors are non-blockable]\label{lem:corridor}
For each of the configurations shown in Figures~\ref{fig:fix} and~\ref{fig:turn}, there is no legal way to obstruct the gadget and prevent a pawn from passing through it.
\end{lemma}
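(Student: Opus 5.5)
The argument is a local, purely geometric check on the wall grid, and it uses only the two wall rules shared by all four games: walls have length two, and a new wall may neither cross nor overlap a wall already on the board. Think of a one-cell-wide corridor as a sequence of unit \emph{gates}: the grid segments of length one that a pawn must cross to go from one corridor cell to the next. To obstruct the gadget, some wall would have to cover one of these gates, since the side boundaries of the corridor are already walled off. So it is enough to show that, in each configuration of Figures~\ref{fig:fix} and~\ref{fig:turn}, no gate can be covered by a legal length-two wall.

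First, fix a gate $g$ and list every length-two wall that could cover it. In the directions where blocking is relevant, a length-two wall lies along one grid line and occupies two consecutive unit segments. Covering $g$ therefore means using $g$ together with one of its two collinear neighbours on that line. Each neighbour lies either inside the neighbouring side wall's column, or on the segment that meets the side wall at a grid point. So there are exactly two candidate placements per gate, one on each side.

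Second, rule out each candidate using the rungs.
\begin{itemize}
\item In an \emph{aligned} portion, the rungs on the two sides are at the same height and protrude into the adjacent cells. Each of them occupies a wall midpoint on the grid point next to the gate. Each candidate either overlaps a rung segment or crosses it at its midpoint, which is illegal.
\item In a \emph{staggered} portion, one side's rung kills one candidate in the same way. The other candidate extends toward the opposite side, and there I need the neighbouring (offset) rung, together with the permanent corridor boundary, to occupy the required segment or midpoint.
\end{itemize}
I expect the main obstacle to be exactly this staggered case, and the analogous special cells of the edge-switch, parity-fix and turn gadgets. At the turn and at the switch between aligned and staggered rungs, the rung pattern is interrupted. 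For these I would check the handful of exceptional gates one by one against the figures: for each gate, exhibit the specific placed wall that every candidate would cross or overlap.

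Finally, I would note two points about generality. Pawns cannot block a corridor permanently, and walls are the only obstruction used. The argument never refers to pawn movement rules, so it holds for every game, including \textsc{Blockade}'s directional walls, since those are just a restriction of the candidate set. The non-blocking legality constraint is not even needed, which matches the paper's remark that it is not assumed for general race-and-wall games.
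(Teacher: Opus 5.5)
This is essentially the paper's argument: a blocking wall must cover a gate of the one-cell-wide channel, and because it has length two it extends one unit past the channel boundary, where the gadget's existing walls make it an illegal overlap or crossing; bends and switches are handled by the same local check against the figures. Your requirement that \emph{both} candidate placements per gate be excluded is more careful than the paper's ``on at least one side'' wording, but note one correction: in a straight portion a rung is parallel to the blocking wall, so it can only be overlapped, never crossed, and on a rung-free side (the staggered case) the candidate must therefore be excluded by crossing the corridor's boundary wall centred at that grid point, with the offset rung on the neighbouring grid line playing no role.
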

\begin{proof}
To stop a pawn in a one-cell-wide channel, a player must lay a wall across the channel, i.e.\ on a grid line separating two consecutive channel cells. Such a wall has length two and therefore extends one further unit into a cell adjacent to the channel. By construction the rungs occupy, on every grid line that crosses the channel, exactly that adjacent segment on at least one side (on both sides in the aligned case, on the alternating side in the staggered case, and likewise around a bend and through the switch). Any cross-wall would thus overlap or cross an existing wall, which is illegal. Hence no legal wall blocks the channel, and the pawn can always pass.
\end{proof}
\begin{figure}[ht]
    \centering
    \includegraphics[width=0.45\linewidth]{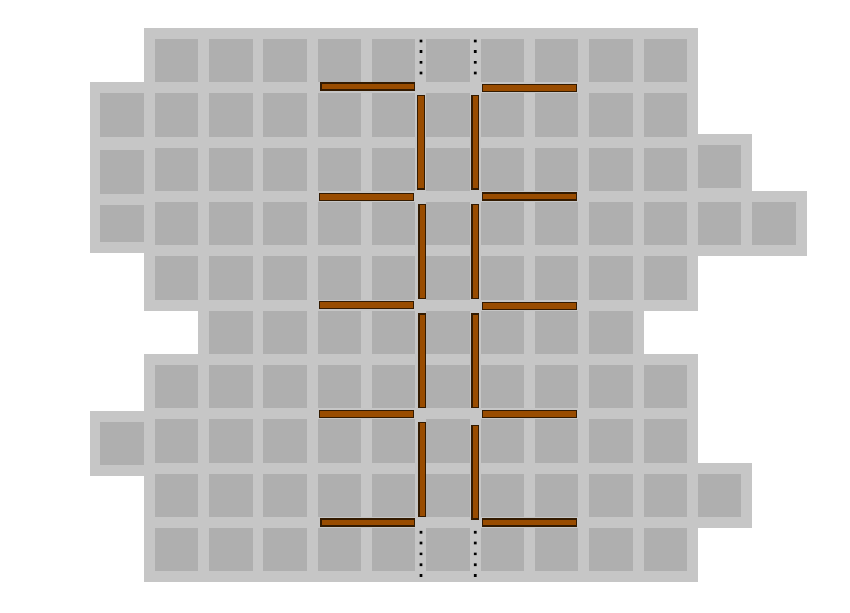}
    \includegraphics[width=0.30\linewidth]{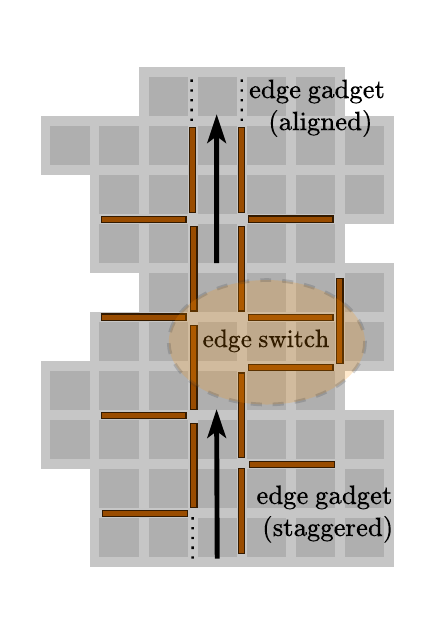}
    \caption{Left: a non-blockable \emph{aligned} corridor that can also be used as an edge gadget. Right: a gadget used to transform a staggered corridor into an aligned corridor.}
    \label{fig:fix}
\end{figure}
\begin{figure}[h]
    \centering
    \includegraphics[width=0.4\linewidth]{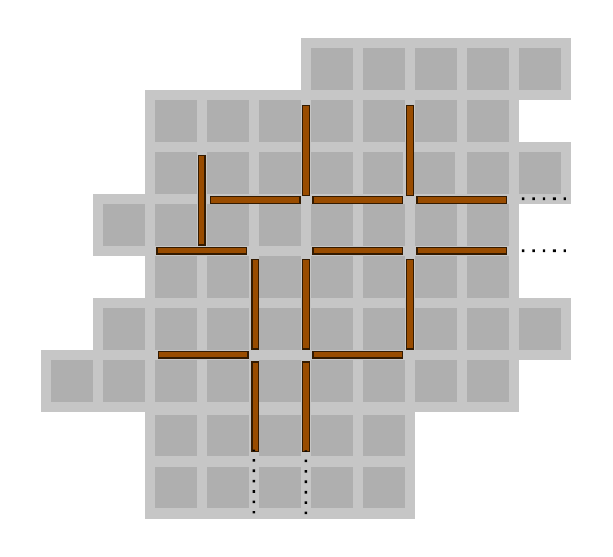}
    \includegraphics[width=0.4\linewidth]{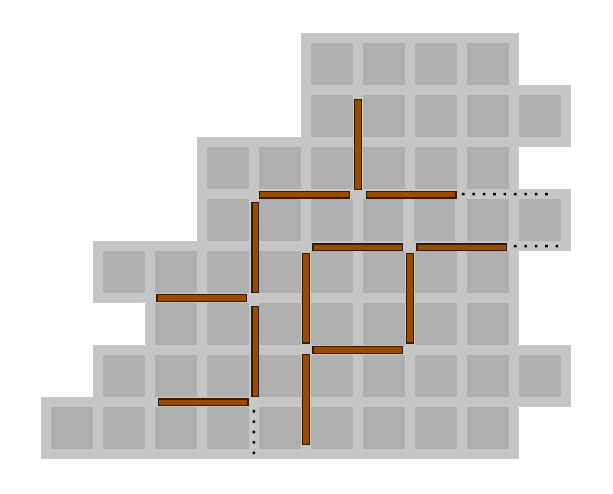}
    \caption{Left: right turn of an aligned edge gadget. Right: right turn of a staggered edge gadget.}
    \label{fig:turn}
\end{figure}
\subsection{Open vertex gadget}
Every still-uncoloured vertex has degree at most three, and its gadget is shown in Figure~\ref{fig:free}. It consists of a small central chamber into which the (at most three) incident corridors open. In the empty gadget the chamber is open enough that the incident corridors are not yet mutually connected in a way that prevents the opponent from severing them: White must spend a wall to secure a passage, and Black two walls to seal it off. This is exactly what lets the gadget mimic the colouring of $v$. Vertices of degree two are built in the same way, with one of the three ports already walled off; a degree-one vertex never needs to be traversed and may be treated as a dead end.

\begin{figure}[h]
    \centering
    \includegraphics[width=0.49\linewidth]{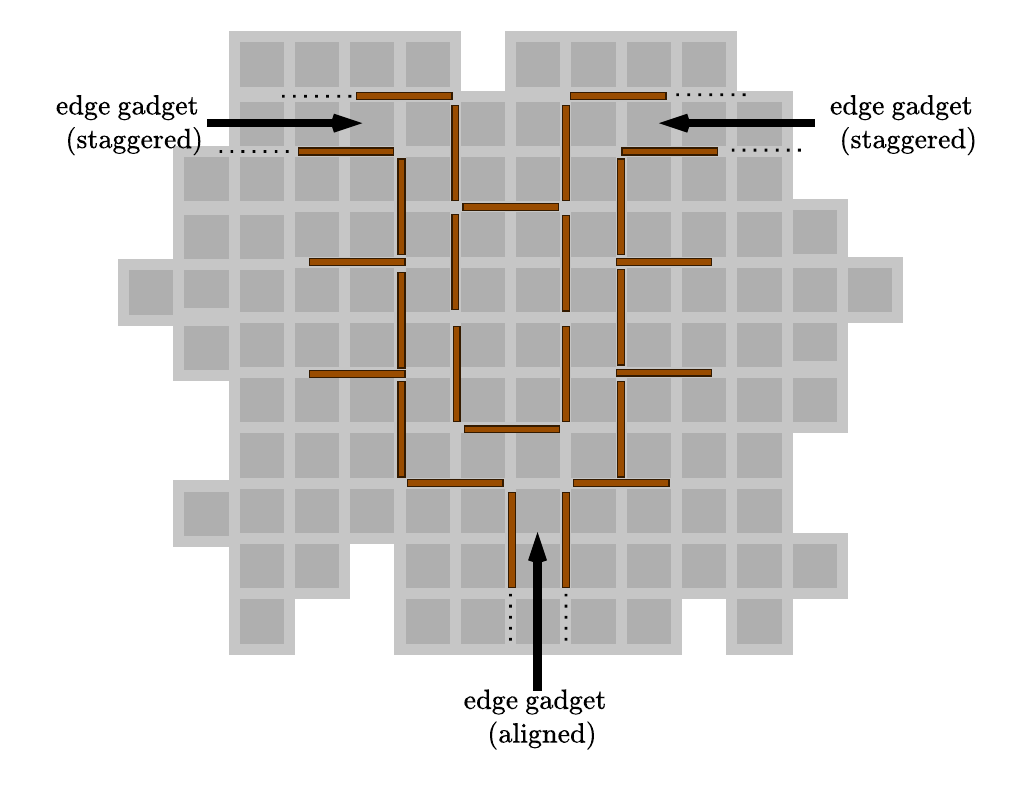}
    \includegraphics[width=0.44\linewidth]{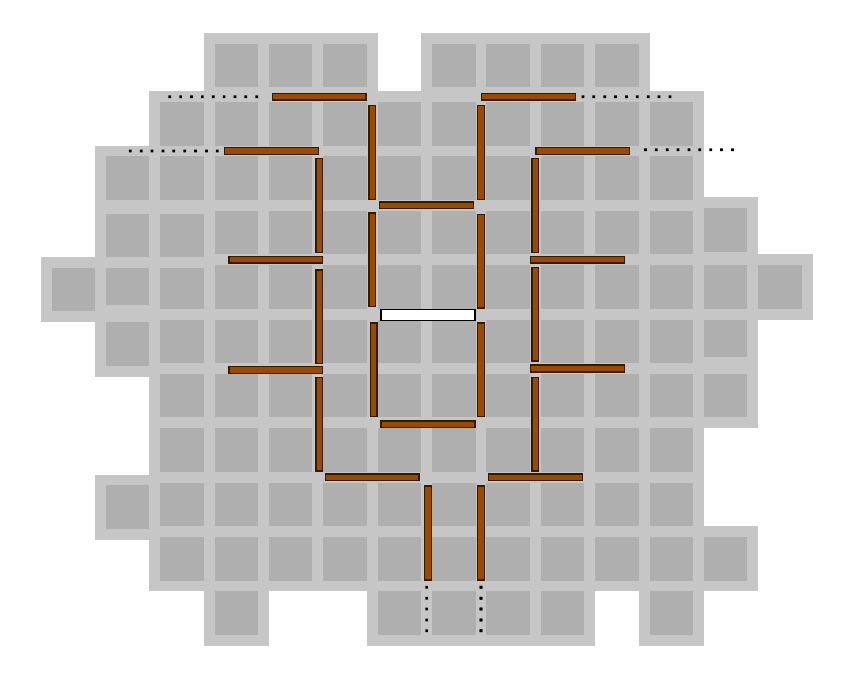}
    \caption{Left: gadget for an unoccupied vertex $v$ with $\deg(v)=3$. Right: vertex gadget colored by White.}
    \label{fig:free}
\end{figure}

\begin{lemma}[White colours a vertex]\label{lem:white}
By placing the single white wall shown in Figure~\ref{fig:free}, White turns the chamber into a passage that cannot afterwards be obstructed by Black. Equivalently, placing this wall \emph{saves} the vertex, and White can use it on a route to $t$.
\end{lemma}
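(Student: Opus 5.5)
The plan is to reduce the statement to the same local argument as Lemma~\ref{lem:corridor}: once White's wall is placed, every grid line that could carry a wall cutting the chamber is already occupied by a wall on at least one side. The first step is to enumerate the \emph{cuts} of the chamber in the empty gadget of Figure~\ref{fig:free}. These are the minimal sets of unit grid segments inside the chamber whose closure would disconnect the ports from one another. The text preceding the statement says that Black needs two walls to seal the empty chamber. So we expect exactly two unit-segment positions (or two pairs of positions) at which a length-two wall could legally be laid across the passage; call them the \emph{weak spots}. Everywhere else the rungs of the incident corridors and the fixed walls of the gadget already rule out any cross-wall, exactly as in Lemma~\ref{lem:corridor}.

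Next, we check that the single white wall of Figure~\ref{fig:free} (right) occupies, or crosses at its midpoint, the unit segments adjacent to every weak spot. Then any Black wall across a weak spot would have to overlap or cross that white wall or an existing rung. A length-two wall laid across a one-cell passage necessarily protrudes one unit into a neighbouring cell, and that neighbouring segment is the one now taken. This is a finite case check over the possible placements of a length-two wall meeting the chamber interior: horizontal and vertical, each with a bounded number of offsets. For each placement we show that it either overlaps or crosses an existing wall, or fails to separate any two ports. The result is that, after White's move, the chamber together with its incident corridors forms a connected region that no legal wall can disconnect.

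Finally, we combine this with Lemma~\ref{lem:corridor}. Each incident corridor is non-blockable, and the chamber is now non-blockable, so a pawn entering through any port can exit through any other port regardless of Black's future walls. By the permanence of walls, this property is preserved for the rest of the game. Hence the vertex is ``saved'' and can serve on White's route to $t$. For degree-two gadgets the same check applies with one port pre-walled.

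The main obstacle is the case analysis in the second step. We must make sure that the white wall does not merely block one cut while leaving a diagonal or offset placement still able to sever two of the three ports. We would organize the check by the three pairs of ports: for each pair, exhibit the unique channel path between them, and verify that every grid line crossing that path is already blocked on one side. This is the same rung criterion used for corridors.
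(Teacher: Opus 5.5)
Your proposal is correct and takes essentially the same route as the paper. The paper's proof observes that after White's wall, the chamber together with its boundary walls is a one-cell-wide channel whose flanking grid lines are already occupied, so the argument of Lemma~\ref{lem:corridor} applies and no wall can ever be laid across it; your weak-spot enumeration and finite case check, organized by the unique channel path between each pair of ports, is a more explicit verification of that same rung criterion, with permanence of walls carrying the conclusion through the rest of the game.
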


\begin{proof}
After the white wall is placed, the chamber together with its boundary walls forms a channel one cell wide whose flanking grid lines are already occupied by walls, exactly as in a corridor. By the argument of Lemma~\ref{lem:corridor}, no further wall can be laid across it, so the connection through the vertex is permanent.
\end{proof}

\begin{lemma}[Black colours a vertex]\label{lem:black}
If both black walls are placed as in Figure~\ref{fig:Bbreak}, no pair of incident edge gadgets can be connected through the chamber. Moreover, after Black places the first of these two walls, the vertex is already virtually blocked: White cannot prevent Black from later placing the second wall. 
\end{lemma}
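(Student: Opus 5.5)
The argument follows the same local pattern as Lemmas~\ref{lem:corridor} and~\ref{lem:white}: everything is decided by which grid segments inside the chamber of Figure~\ref{fig:free} are still free. I will use the figure in three steps.

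\emph{First claim.} Take the configuration of Figure~\ref{fig:Bbreak} with both black walls placed. The chamber cells together with the port cells of the (at most three) incident corridors form a small grid graph, and I will list its adjacencies that survive the walls. The two black walls, together with the fixed boundary walls and rungs of the gadget, should form a closed barrier around the central cell and each port entrance. I will check this as a finite cut: for each of the three pairs of ports, every cell path from one to the other must cross a segment covered by a wall. Since the chamber has constant size, this is a direct inspection of the figure. Diagonal steps (Pinko Pallino) and two-square steps (Blockade) add one more check: a move between cells is allowed only if some orthogonal path of length at most two between them is unobstructed. The cut must therefore have no corner gaps. I will confirm this at the junctions where the black walls meet the boundary walls.

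\emph{Second claim.} After the first black wall $b_1$ is placed, I will identify the grid segments that the second wall $b_2$ occupies, together with the segments it crosses at its midpoint. I will then show that White has no legal single wall that either (i) occupies or crosses one of these segments, so that $b_2$ becomes illegal, or (ii) completes a protected passage the way the white wall of Lemma~\ref{lem:white} does. The key observation is that $b_1$ already overlaps or crosses the white securing wall of Figure~\ref{fig:free}, so White's saving move is no longer available. Any other White wall near the chamber either lies in an already-walled region or does not touch the segments of $b_2$. By symmetry, $b_1$ may be either of the two black walls, and both orders must be checked.

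Two points remain for the second claim. First, the threat is persistent, not a single-move race. Walls are permanent and never removed, and White cannot pre-empt $b_2$ later either, since by the case analysis no White wall ever interferes with it. So Black can place $b_2$ whenever it wants, which is why the vertex is \emph{virtually} blocked. Second, the non-blocking legality rule could forbid $b_2$ if it cut a pawn's last path. I expect the paper to treat this at the global level, through a bypass or timing argument in Section~\ref{sec:reduction}. Here I would only note that $b_2$ is locally legal with respect to crossing and overlap.

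The main obstacle is the exhaustive local case analysis in the second claim. It must cover every White wall that could interact with the chamber, in both orientations, and also account for White's alternative of blocking the black wall's segments partially through a crossing at a midpoint. I would enumerate the constant number of grid segments in the gadget's bounding box and mark each as occupied, crossed, or free after $b_1$. The claim then reduces to checking that no free length-two placement touches the segments of $b_2$.
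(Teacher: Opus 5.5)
Your first claim agrees with the paper, and so does your decision to defer the non-blocking rule to a global argument (the paper uses the safe corridor of Lemma~\ref{lem:legal}). The extra corner check for diagonal and two-square moves is a harmless refinement of the paper's one-line ``partitioned into separated regions'' argument. The genuine gap is in the second claim.

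You aim to show that after $b_1$ no legal White wall occupies or crosses a segment of $b_2$, and you reduce the lemma to checking that no free length-two placement touches those segments. That cannot be true. For $b_2$ to remain available to Black, its segments must be free. So White may itself drop a wall exactly on the position of $b_2$, and possibly also a one-unit shift of it or a perpendicular wall through its midpoint. Legality of a wall does not depend on who places it, and in \textsc{Blockade} White also holds horizontal walls. After such a move $b_2$ is literally illegal. Your assertion that ``no White wall ever interferes with $b_2$'' therefore fails, and the proposed enumeration would turn up counterexamples rather than a proof.

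What is missing is the right invariant: not that $b_2$ stays placeable, but that the chamber ends up sealed. Walls only ever remove adjacencies, so White can keep the vertex usable only by making the sealing placement illegal \emph{without} sealing the chamber itself. The paper's proof is exactly the dichotomy you need. Every White wall that pre-empts the position of $b_2$ falls into one of two cases:
\begin{enumerate}
\item it crosses $b_1$ and is therefore illegal (this is where your observation about the securing wall of Figure~\ref{fig:free} belongs);
\item together with $b_1$ it already separates the ports, i.e.\ White closes the chamber on its own, which only helps Black.
\end{enumerate}
Redirect your enumeration to verify that every free placement meeting the segments of $b_2$ falls into one of these two cases. Because the set of legal placements only shrinks as walls accumulate, checking single White walls against $b_1$ suffices. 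Your clause~(ii) about completing a protected passage is then subsumed.

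Note also that you only need the order shown in the left panel of Figure~\ref{fig:Bbreak}. Nothing guarantees that the reverse order works, and the lemma does not require it.
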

\begin{proof}
With both walls in place, the chamber is partitioned so that the openings of distinct corridors lie in regions separated by walls. Hence no path can connect two incident edge gadgets through the chamber; see Figure~\ref{fig:Bbreak}.
It remains to justify the commitment claim. Once the first black wall has been placed, it is irreversible by wall permanence. The second wall remains a legal placement for Black regardless of White's reply (its legality with respect to the non-blocking constraint is ensured by the safe corridor; see Lemma~\ref{lem:legal}). Indeed, any white wall that would pre-empt that position would either cross the first black wall, which is illegal, or close the chamber, which only helps Black. Thus White cannot prevent Black from completing the seal. 

\end{proof}
\begin{figure}[h]
    \centering
    \includegraphics[width=0.45\linewidth]{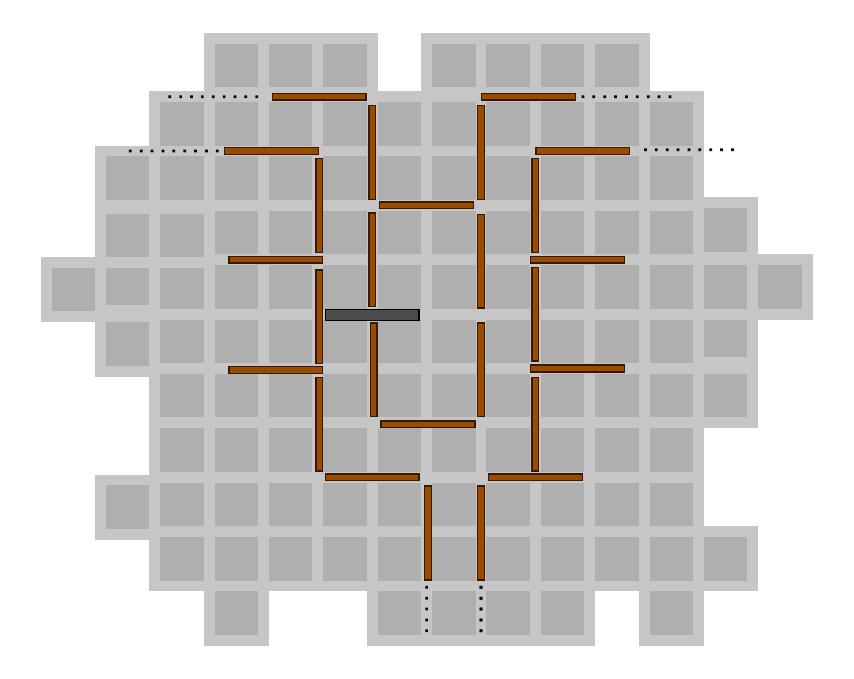}
    \includegraphics[width=0.45\linewidth]{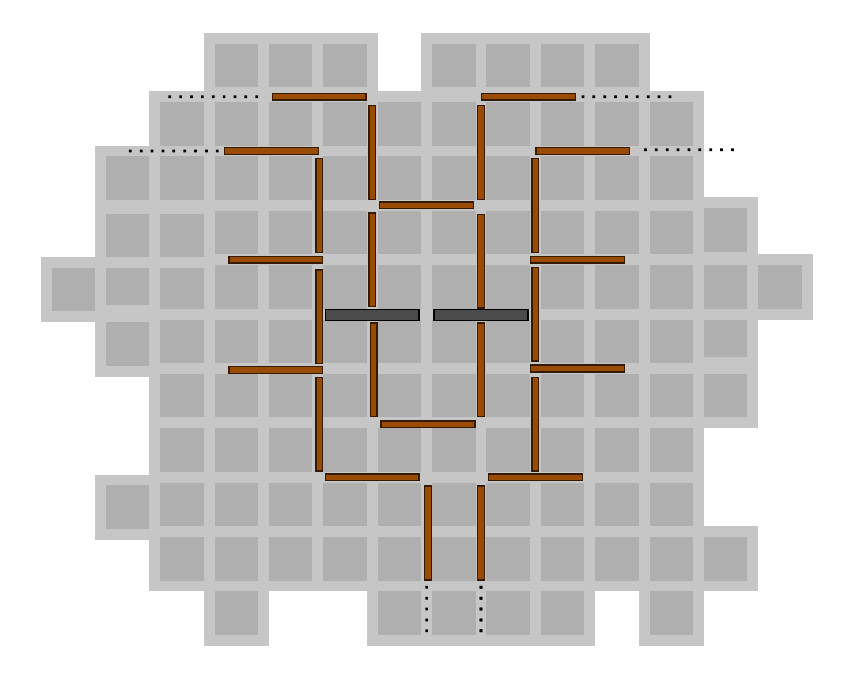}
    \caption{Left: a partially blocked vertex gadget. Right: a vertex gadget colored by Black.}
    \label{fig:Bbreak}
\end{figure}

\subsection{Already coloured vertices and terminals}
By Lemma~\ref{lem:degree} we may assume the position has no black vertex, that every vertex has degree at most three, and that the terminals lie on the outer face; it remains to describe the gadgets for the white vertices and the terminals.

Every vertex coloured white is permanently available to White, and we realize it as a \emph{non-blockable junction}: a small open region of fixed size where four corridors meet and through which every pair of \emph{open} corridors is connected, with no legal wall placement able to disconnect them (Figure~\ref{fig:junction}). The gadget always has the same size; the degree of the vertex only determines how many of the four corridors are actually open, the remaining ones being walled off. Its non-blockability follows from Lemma~\ref{lem:corridor}. The same gadget realizes the degree-three junctions produced by the decomposition of Lemma~\ref{lem:degree}.
\begin{figure}[h]
    \centering
    \includegraphics[width=0.35\linewidth]{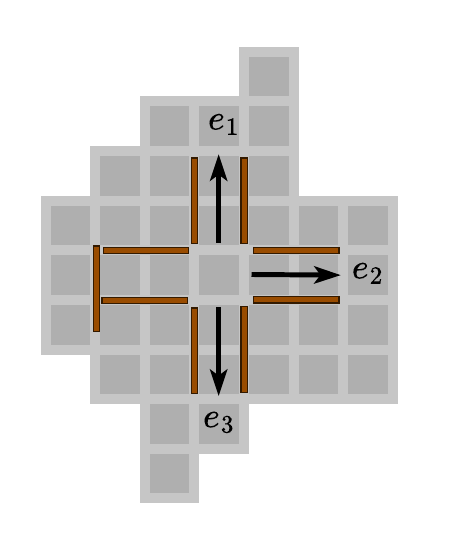}
    \includegraphics[width=0.38\linewidth]{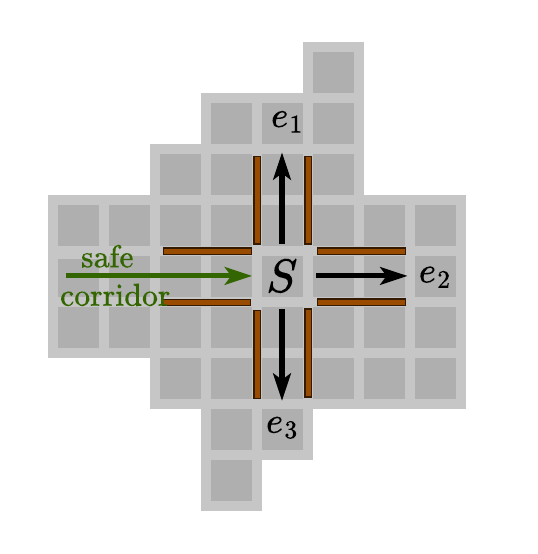}
    \caption{The junction gadget. Left: a degree-three junction for a white vertex.\\ Right: a degree-three terminal, whose fourth corridor is the safe corridor.}
    \label{fig:junction}
\end{figure}

A terminal is realized by the same junction gadget: its incident corridors occupy that many of the four positions, and the safe corridor of Section~\ref{sec:reduction} uses one of the remaining ones (Figure~\ref{fig:junction}, right).
\noindent Having described how each local gadget is translated, we can now represent a whole graph-Hex instance inside the board, as in Figure~\ref{fig:embedding}.
\begin{figure}[h]
    \centering
    \begin{minipage}[c]{0.32\linewidth}
        \centering
        \resizebox{\linewidth}{!}{%
        \begin{tikzpicture}[x=1.4cm,y=1.4cm,
          vopen/.style ={draw,thick,rectangle,fill=gray!55,minimum size=8mm,inner sep=0pt},
          vwhite/.style={draw,thick,circle,fill=white,minimum size=8mm,inner sep=0pt},
          vblack/.style={draw,thick,circle,fill=black,text=white,minimum size=8mm,inner sep=0pt},
          gline/.style ={gray!35,very thin},
          gedge/.style ={thick,line cap=round}]
          \foreach \x in {-1,...,5} \draw[gline] (\x,-1) -- (\x,4);
          \foreach \y in {-1,...,4} \draw[gline] (-1,\y) -- (5,\y);
          \node[vwhite] (s) at (0,2) {$s$};
          \node[vopen]  (q) at (2,3) {};
          \node[vwhite] (t) at (4,2) {$t$};
          \node[vwhite] (m) at (2,1) {};
          \node[vblack] (b) at (2,0) {};
          \draw[gedge] (s) -- (0,3) -- (q);
          \draw[gedge] (t) -- (4,3) -- (q);
          \draw[gedge] (q) -- (m);
          \draw[gedge] (m) -- (b);
          \draw[gedge] (s) -- (0,1) -- (m);
          \draw[gedge] (t) -- (4,1) -- (m);
        \end{tikzpicture}}
    \end{minipage}\hfill
    \begin{minipage}[c]{0.64\linewidth}
        \centering
        \includegraphics[width=\linewidth]{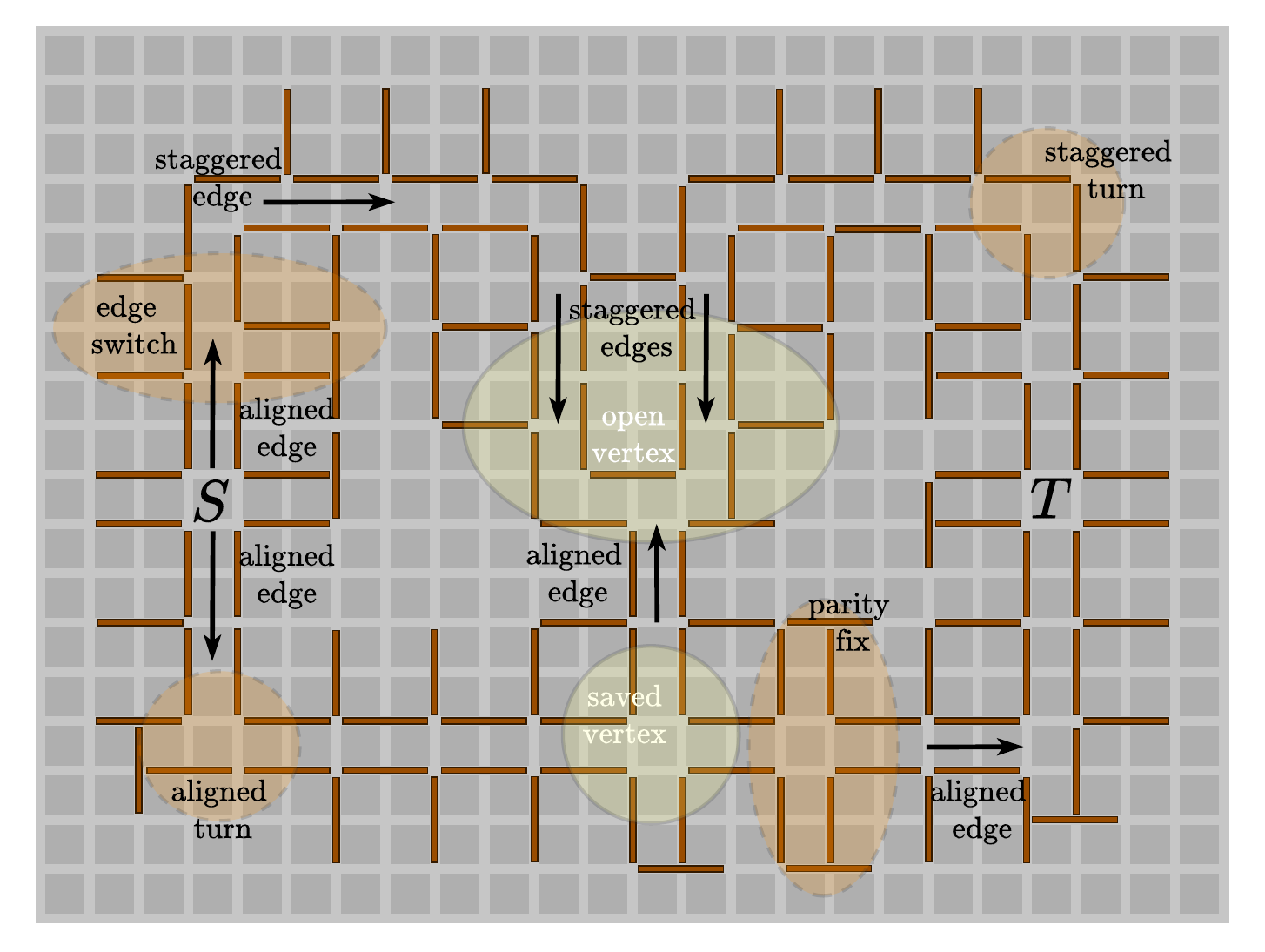}
    \end{minipage}
    \caption{On the left, a graph-Hex instance: gray squares are open (uncoloured) vertices, circles are coloured vertices; white circles are coloured by White (the terminals $s,t$ among them) and black circles are coloured by Black. On the right, its board embedding.}
    \label{fig:embedding}
\end{figure}

\begin{lemma}[Polynomial-time embedding]\label{lem:embedding}
Given a graph-Hex position $\langle H,s,t,\chi\rangle$, the board embedding described above can be constructed in time polynomial in $|V(H)|$. It occupies a board of dimensions $\gwidth(H)\times\gheight(H)$ with $\gwidth(H)\le 11w$ and $\gheight(H)\le 11h$, where $w,h=O(|V(H)|)$ are the side lengths of the orthogonal grid drawing of $H$.
\end{lemma}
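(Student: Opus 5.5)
The plan is to account separately for the cost of each stage of the construction and check that each is polynomial. Both the bound and the running time should follow from the fact that every grid unit of the drawing is replaced by a constant-size block of cells.

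First, I apply Lemma~\ref{lem:degree} to the input position. This runs in polynomial time and increases the number of vertices only linearly. The result is a graph $H'$ of maximum degree three, with no black vertices and with terminals on the outer face, and $|V(H')|=O(|V(H)|)$. Since $H'$ is planar with maximum degree at most four, Fact~\ref{fact:ortho} gives in polynomial time an orthogonal grid drawing of area $O(|V(H')|^2)$. Its bounding box is $w\times h$ with $w,h=O(|V(H)|)$. In particular the drawing has $O(|V(H)|^2)$ grid points and $O(|V(H)|^2)$ unit grid segments, so all subsequent work is local over polynomially many pieces.

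Next, I fix a scale factor of $11$ and map each grid point $(x,y)$ to the $11\times 11$ block of board cells with lower-left corner $(11x,11y)$. I then check from the figures that each gadget fits inside its allotted block:
\begin{itemize}
\item the open vertex chamber of Figure~\ref{fig:free};
\item the junction used for white vertices and terminals (Figure~\ref{fig:junction});
\item a straight corridor segment, a turn (Figure~\ref{fig:turn}), the edge-switch (Figure~\ref{fig:fix}) and the parity-fix.
\end{itemize}
Each should fit into an $11\times 11$ block, with its ports at fixed midpoints of the block sides, so that corridors in adjacent blocks meet. The edge-switch and parity-fix are placed inside a unit segment's block; if an edge of the drawing has no interior segment to host them, I would first refine the drawing by a constant factor, which only changes the constant. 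Also, an $11$-cell span is odd, so a corridor tiling by length-two walls might misalign. The parity-fix handles exactly this, and it can be decided locally, segment by segment, which fix is needed. The safe corridor for the terminals adds at most a constant-width strip; I would absorb it into the blocks along the outer face, or otherwise note that it changes the bound by an additive constant.

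For the running time, each block is filled by a constant-size template chosen by local data: the vertex type, the directions of incident segments, and the aligned/staggered and parity state. Determining the aligned/staggered demand at each port and the parity along each edge amounts to walking along each drawn edge once, which is linear in the drawing size. Hence the board, of size $\gwidth(H)\times\gheight(H)\le 11w\times 11h$, is written in time $O(wh)=\mathrm{poly}(|V(H)|)$. Separately, the wall counts for each player are set linearly in the board dimensions.

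I expect the main obstacle to be the constant $11$ itself. It is a geometric claim that the largest gadget (presumably the open-vertex chamber, or a switch combined with a turn) together with its port alignment fits in $11$ cells, and it can only be verified by inspecting the exact wall layouts in the figures. My first approach would be to take the widest gadget, count its cells, and add the one-cell parity slack.
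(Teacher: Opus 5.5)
Your proposal is correct and follows essentially the same route as the paper. Both arguments apply Lemma~\ref{lem:degree}, take the polynomial-time $w\times h$ orthogonal drawing of Fact~\ref{fact:ortho} with $w,h=O(|V(H)|)$, use the fact that every gadget fits in an $11\times 11$ block (which, as you anticipate, the paper simply asserts from the figures) to scale each grid unit by $11$, and observe that writing down the walls is polynomial in the board size; your extra bookkeeping for the edge-switch, parity-fix and safe-corridor port is consistent with the paper's proof, just more detailed than it.
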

\begin{proof}
By Fact~\ref{fact:ortho}, the orthogonal drawing of $H$ (after the degree reduction of Lemma~\ref{lem:degree}) lies in a $w\times h$ rectangle with $w,h=O(|V(H)|)$ and is computable in polynomial time. Each gadget has constant size (at most $11\times 11$ cells), so replacing every vertex and edge of the drawing by its gadget enlarges each grid unit by a constant factor; hence the board has dimensions $\gwidth(H)\le 11w$ and $\gheight(H)\le 11h$, both polynomial in $|V(H)|$, and writing down the resulting walls takes time polynomial in the board size.
\end{proof}

\section{The reduction}\label{sec:reduction}
It remains to turn the embedding of Lemma~\ref{lem:embedding} into a full configuration by choosing the safe corridors, the pawn positions and the wall budgets. We describe a single configuration that serves \textsc{Quoridor}, \textsc{Maze Attack}, and \textsc{Pinko Pallino} at once: it is built so as to remain valid whether or not diagonal moves are allowed. The configuration for \textsc{Blockade} is then obtained from this one by adding only the parts needed for the second pawn of each player. We now carry out the construction for the first three games.

Given a graph-Hex position on a graph $H$, the corresponding board position is obtained from the embedding of $H$ of Lemma~\ref{lem:embedding}. The embedded graph occupies a rectangle of size $\gwidth(H)\times\gheight(H)$, positioned so that the terminal $t$ lies at distance $2$ from White's target side and is joined to it by a single pair of walls forming an unbreakable corridor. The two pawns are then placed in two distinct connected components of the board.

\paragraph{White component.}
We orient the board so that moving \emph{left} goes toward Black's target side and moving \emph{right} toward White's target side, while \emph{up} and \emph{down} denote the two perpendicular directions. Starting from $s$, we construct an unblockable safe corridor. The corridor first goes straight to the left (toward Black's target side) for $2\cdot |V(H)|$ cells; call the endpoint of this segment $Q$. From $Q$, the corridor goes downward and then continues to the right for another $2\cdot |V(H)|$ cells, reaching a point $A$ directly below $s$.
From $A$, the corridor continues as a snake-like path placed exactly below the embedding: it moves to the right for $\gwidth(H)$ cells, then goes down and moves to the left for another $\gwidth(H)$ cells, then goes down again and moves to the right, and so on. This back-and-forth pattern is repeated $2\cdot \gheight(H)$ times, each horizontal segment having length $\gwidth(H)$ and lying directly below the embedding. Two consecutive horizontal corridors lie at distance $2$ from each other and are therefore joined by a vertical segment of length exactly $4$ (the value used for the vertical segments in Remark~\ref{rem}).
After the last such segment, the corridor continues toward Black's target side for another $2\cdot |V(H)|$ cells, reaching point $B$. It then turns downward and back toward White's target side for at least another $2\cdot |V(H)|+\gwidth(H)$ cells, remaining unblockable throughout.
\paragraph{Black component.}
The black component is obtained by copying the snake-like path from $A$ to $B$ and placing this copy on the left, at distance $2\cdot |V(H)|$ from Black's target side. Let $A'$ and $B'$ denote the points corresponding to $A$ and $B$ in this copied path.
Unlike in White's component, the corridor from $B'$ continues directly toward Black's target side for another $2\cdot |V(H)|$ cells, again remaining unblockable throughout.
\paragraph{Pawns, walls, and turn order.}
The white pawn is placed at $Q$ and the black pawn at $A'$, and White has $|V(H)|$ walls to place while Black has $2\cdot |V(H)|$ walls. The player to move in the board position is the same as in the graph-Hex position. The number and placement of the pawns and the division of the wall budget are the only game-dependent parts of the construction; they are adjusted for \textsc{Blockade} below.\\ Because the two pawns lie in separate components, we have full control over the distance between them: increasing it as a function of $\gwidth(H)$ and $\gheight(H)$ enlarges the board by any polynomial amount without touching the gadgets or the logic of the reduction. Hence in variants where the number of walls depends on the board size, the budget can be scaled to any polynomial in $\gwidth(H)$ and $\gheight(H)$, the linear case included.

\begin{remark}\label{rem}
    
Let $d_1$, $d_2$, and $d_3$ be the number of moves required for a pawn to reach $B$ from $A$ under, respectively, orthogonal-only movement (\textsc{Quoridor}, \textsc{Maze Attack}), single-cell movement with diagonals (\textsc{Pinko Pallino}), and \textsc{Blockade} movement (two orthogonal steps or one diagonal step). The path from $A$ to $B$ consists of the snake---$2\gheight(H)$ horizontal segments of length $\gwidth(H)$ joined by $2\gheight(H)-1$ vertical connectors---followed by a straight segment of $2\cdot|V(H)|$ cells reaching $B$. A connector costs $4$ orthogonal moves but only $2$ once diagonals are allowed (it spans a net vertical distance of $2$); the straight parts cost the same either way; and a \textsc{Blockade} move covers at most two of the single-cell steps counted by $d_2$. Hence
$$
\begin{aligned}
d_1 &= (\gwidth(H)+4)\bigl(2\gheight(H)-1\bigr) + \gwidth(H) + 2\cdot|V(H)|,\\
d_2 &= (\gwidth(H)+2)\bigl(2\gheight(H)-1\bigr) + \gwidth(H) + 2\cdot|V(H)|,\\
d_3 &\geq \tfrac12\, d_1 \geq \gwidth(H)\times\gheight(H) + |V(H)|,
\end{aligned}
$$
and in particular $d_1 \geq d_2 \geq 2 \cdot\gwidth(H)\times\gheight(H) + 2\cdot|V(H)|$.
\end{remark}

\begin{lemma}[Wall placements are legal]\label{lem:legal}
In the constructed position each pawn always retains the safe corridor of its own component as a route to its target side. Consequently no placement of a wall inside a vertex or edge gadget ever removes a pawn's last route, so every such placement satisfies the non-blocking constraint and is legal.
\end{lemma}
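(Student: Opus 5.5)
The plan is to show that each pawn's safe corridor is a single continuous path to its own target side, made only of gadgets covered by Lemma~\ref{lem:corridor}, and that it is wall-disjoint from every vertex and edge gadget. The second claim then follows from the first.

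First, for White: the pawn starts at $Q$, and the corridor $Q\to A\to(\text{snake})\to B\to$ (downward and then toward White's side) ends on White's target side. This path consists of straight aligned or staggered segments, right-angle turns as in Figure~\ref{fig:turn}, and, where needed, edge-switch and parity-fix pieces as in Figure~\ref{fig:fix}. By Lemma~\ref{lem:corridor} no legal wall can cross any of these pieces. For Black: the pawn starts at $A'$, follows the copied snake to $B'$, and then runs straight to Black's target side, again through corridor pieces only, so the same argument applies.

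The point needing care is that the route must be non-blockable \emph{over the whole play}, not just in the initial position. I would argue this by induction on the number of walls placed. Suppose that before some placement every cross-line of every safe corridor is occupied on at least one side by a rung, as in the proof of Lemma~\ref{lem:corridor}. Any new wall cutting the channel would overlap or cross a rung, so it is illegal. Since walls are permanent, the rungs persist, so the invariant holds after the placement. Hence at every moment each pawn has a route along its safe corridor, and the pawn's current cell lies on that corridor.

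One detail concerns pawns that have left the corridor. In the White component the pawn may enter the embedding through $s$, since the corridor from $Q$ to $A$ is attached at $s$. I would observe that every cell of the White component remains connected to the safe corridor, because the gadget interiors join corridors that are themselves unblockable and meet $s$. More simply, I would note that a pawn inside the embedding can always retrace its steps to $s$, because the cells it just traversed are linked by permanent non-blockable corridors or by vertex chambers. This last point is the main obstacle. To handle it, I would strengthen the invariant to: "the set of cells reachable by the pawn always contains the safe corridor". This holds because walls never disconnect a pawn from a cell it can reach through non-blockable pieces, and every entrance into the embedding passes through the junction at $s$ or $t$. The junction at $t$ is non-blockable by Figure~\ref{fig:junction} and is itself joined to White's target side by an unbreakable corridor.

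Finally, consider any wall placed inside a vertex or edge gadget. It cannot meet a safe corridor, whose cross-lines are all already occupied. So the pawn's route along its safe corridor, or via $t$, survives the placement, and the non-blocking constraint is satisfied. Such a placement is therefore legal whenever it is geometrically non-overlapping.
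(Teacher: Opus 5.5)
Your first two paragraphs and your final disjointness remark are exactly the paper's proof. The safe corridors consist only of pieces covered by Lemma~\ref{lem:corridor}, their rungs are permanent, and they share no wall with any gadget. So a pawn standing on its safe corridor keeps it as a route, and this settles Black's pawn, whose component is pure corridor. The induction on placements adds nothing beyond wall permanence.

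The genuine gap is your third paragraph, which tries to cover White's pawn after it has entered the embedding. The step ``a pawn inside the embedding can always retrace its steps to $s$, because the cells it just traversed are linked by permanent non-blockable corridors or by vertex chambers'' fails. An uncoloured vertex chamber is precisely a \emph{blockable} piece; that is the content of Lemma~\ref{lem:black}. Your principle that walls never disconnect a pawn from cells reachable through non-blockable pieces therefore does not apply to the way back. Likewise, ``every cell of the White component remains connected to the safe corridor'' is false as soon as Black seals two adjacent vertices, since this isolates the corridor between them.

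No repair of that paragraph can work, because the unconditional claim is false. Let White's pawn pass through an open vertex $u$ and stand in the corridor joining $u$ to another open vertex $v$. Black may legally seal $u$ behind it, as long as the pawn still reaches $t$ through $v$. If the other routes from $v$ back to $s$ are already cut, the pawn can no longer reach the safe corridor, which refutes your strengthened invariant. Moreover, whichever of $u$, $v$ Black seals last leaves the pawn in a pocket: the $u$--$v$ corridor plus two chamber fragments, with no route to White's side. That gadget wall violates the non-blocking constraint.

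The paper's one-line proof makes the same tacit assumption as your second paragraph, namely that the pawn is on its safe corridor. What is provable, and what Theorem~\ref{thm:main} uses, is the lemma under a hypothesis: each pawn is joined to its safe corridor only through non-blockable pieces (corridors, junctions, white-secured vertex gadgets). This holds for Black always. It holds for White when White enters the embedding only along an already secured path. In the converse direction it holds provided Black's sealing is complete before White's pawn gets past $s$. Replace your third paragraph by this restriction instead of trying to prove the invariant.
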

\begin{proof}
The safe corridors of the two components are unblockable (Lemma~\ref{lem:corridor}) and are disjoint from every gadget wall; hence, whatever walls are placed inside the gadgets, each pawn keeps a route to its target side along its safe corridor. The non-blocking constraint forbids only a wall that would remove a pawn's last route, so it never forbids a gadget wall.
\end{proof}

\begin{figure}[h]
\centering
\includegraphics[width=1.0\linewidth]{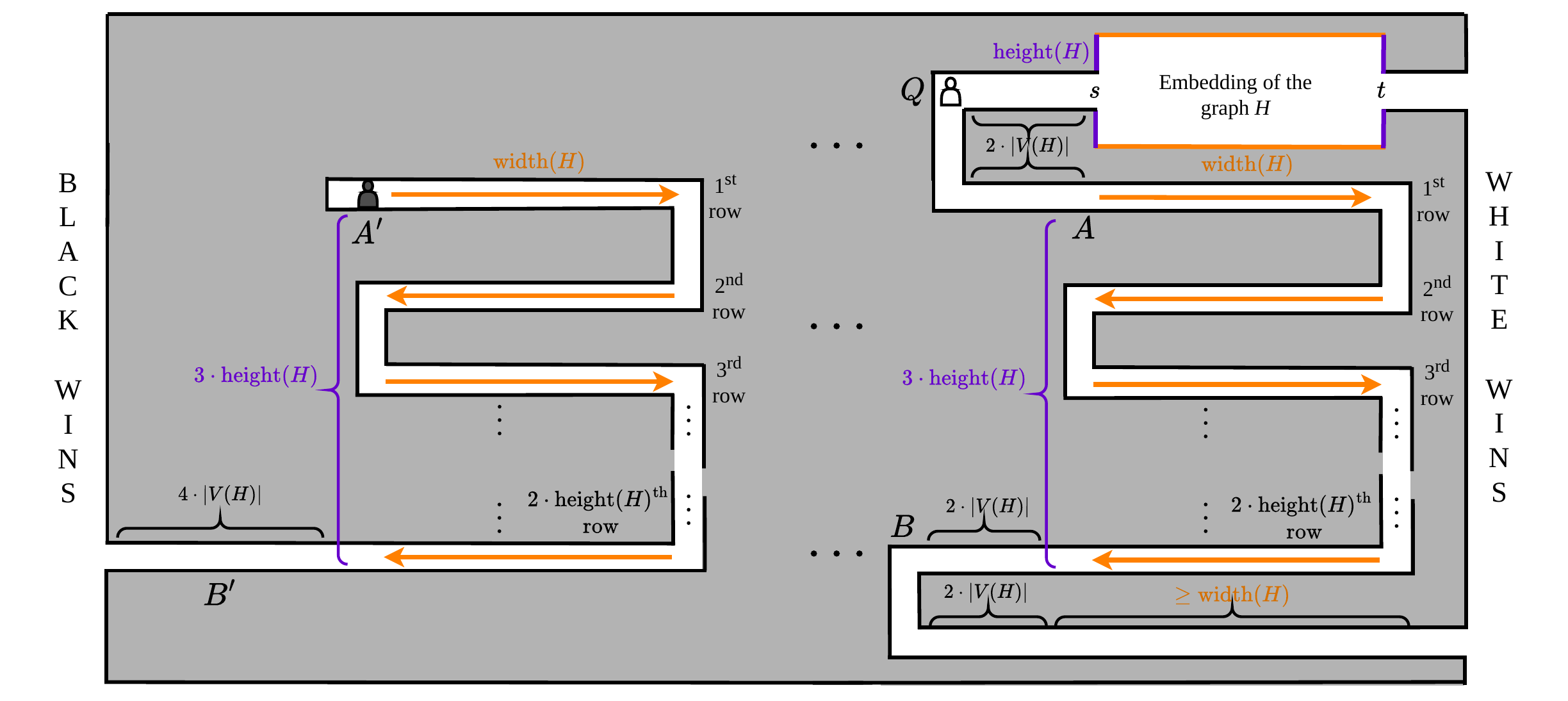}
\caption{Visual summary of the reduction.}
\label{fig:reduction}
\end{figure}
At a high level: if White wins the embedded graph-Hex game, White secures a short $s$--$t$ path---of length less than $\gwidth(H)\times\gheight(H)$, since the whole embedding fits in that many cells---ending near White's target side, and reaches it before Black can traverse its corridor. If White loses, White's only route is the safe corridor; but Black's copy of it is shorter (it omits the tail that brings White's corridor back to White's target side) and the black pawn starts ahead, so Black arrives first. Theorem~\ref{thm:main} below makes this precise.
\begin{lemma}[The confrontation rule is irrelevant]\label{lem:nojump}
In any play of the constructed position, no confrontation between the two pawns can ever occur. Consequently, the outcome does not depend on the confrontation rule.
\end{lemma}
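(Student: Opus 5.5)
The plan is to show that the two pawns can never occupy adjacent or identical cells in any reachable position. This rests on an invariant: the set of cells each pawn can reach is contained in its own connected component of the board. From this invariant, irrelevance of the confrontation rule follows immediately. Every confrontation rule (jump, jump with deviation, block) is triggered only when one pawn attempts to move onto, or next to, a cell occupied by the other. If no such situation ever arises, each pawn's set of legal moves is exactly what it would be on a board containing only that pawn. So the game trees, and hence the outcome, coincide for every choice of confrontation rule.

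\emph{Step 1: the components are separated from the start.} By construction, White's pawn starts at $Q$ in the White component, and Black's pawn at $A'$ in the Black component. These are distinct connected components of the cell graph: the Black copy of the snake lies at distance $2\cdot|V(H)|$ from Black's target side, to the left of and disjoint from White's component, and is enclosed by the unblockable corridor walls. I would verify this from the description, checking that every cell reachable from $Q$ is bounded by the corridor walls and gadget walls of the White component, and likewise for $A'$. I would also check that the components are at distance at least two, so that no single orthogonal or diagonal step (or \textsc{Blockade} double step) can cross between them. This needs the observation that moves never cross walls. For a diagonal step, I would argue that the walls bounding each component form closed boundaries, so a diagonal step cannot slip through a wall corner. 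If necessary, I would use the freedom noted earlier to increase the gap between the components.

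\emph{Step 2: the separation is invariant under play.} Walls are permanent and only ever added, so the cell-adjacency graph (with the movement rule) only loses edges over time; it never gains any. Hence the connected components can only be refined, never merged. By induction on the number of moves, each pawn stays in the component of its starting cell. Since the two components are disjoint and not adjacent, the pawns are never adjacent, so no confrontation occurs.

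The main obstacle is Step 1 for the diagonal-moving games (\textsc{Pinko Pallino} and \textsc{Blockade}). There one must make sure that the wall boundaries of the two components genuinely block diagonal and length-two moves, or more simply that the components are spatially far apart. I would handle this by relying on the explicit placement: the Black component sits at horizontal distance $2\cdot|V(H)|$ from Black's side, and White's safe corridor returns toward White's side, so the two regions are separated by a wide band of cells. Any path between them, even by diagonal or double steps, would have to cross the closed boundary walls of a corridor, which is impossible.
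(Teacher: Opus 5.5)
Your proposal is correct and takes the same route as the paper. The paper's proof simply observes that the two pawns start in distinct connected components and therefore never meet, so no confrontation rule is triggered; your Step~2 (walls are only ever added, so components can only be refined, never merged) and your attention to diagonal and \textsc{Blockade} double steps make explicit what the paper leaves implicit in the phrase ``by construction''.
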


\begin{proof}
By construction, the two pawns are placed in two distinct connected components of the board. Therefore, neither pawn can ever reach the component containing the other pawn, and in particular the two pawns can never interact directly. Hence no confrontation rule can ever be triggered, and the outcome of the game is independent of such rules.
\end{proof}

\begin{theorem}\label{thm:main}
Given a \GHex position, consider the board position produced by the reduction. White has a winning strategy in graph-Hex if and only if White has a winning strategy in \textsc{Quoridor}; by Lemma~\ref{lem:nojump} the same equivalence holds for \textsc{Maze Attack}.
\end{theorem}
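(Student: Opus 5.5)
Both directions go by strategy transfer, with the race timing of Remark~\ref{rem} as the bookkeeping tool. By Lemma~\ref{lem:nojump} the pawns never interact, so the same argument covers \textsc{Maze Attack}. By Lemma~\ref{lem:legal} every gadget wall is legal, so the non-blocking rule never interferes.

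For ($\Rightarrow$), suppose White has a winning strategy $\sigma$ in graph-Hex. White plays $\sigma$ on the board as follows. A Black wall placed in the chamber of an open vertex $v$ is read as Black colouring $v$; by Lemma~\ref{lem:black} the first wall already commits the vertex. White answers by colouring the vertex prescribed by $\sigma$, placing the single wall of Lemma~\ref{lem:white}. Black walls placed elsewhere (corridors, junctions, the white safe corridor) achieve nothing by Lemma~\ref{lem:corridor}. White treats such a move as a pass and answers it with an arbitrary extra colouring, which never hurts in Hex. Black moving its pawn is likewise treated as a pass.

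Once $\sigma$ yields a white $s$--$t$ path, it took at most $|V(H)|$ White turns (White's budget suffices). At that point the pawn, still near $Q$, walks $Q\to s$ ($2|V(H)|$ cells), then along the path, which has fewer than $\gwidth(H)\gheight(H)$ cells, then through the final corridor from $t$. The key accounting is that every Black turn spent on walls is a turn in which Black's pawn does not move. Black needs at least $d_1$ (resp.\ $d_2$) pawn moves from $A'$, and $d_2\ge 2\gwidth(H)\gheight(H)+2|V(H)|$. White's total count of wall turns plus pawn moves is strictly smaller, up to an additive constant absorbed by the choice of lengths. Hence White arrives first.

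For ($\Leftarrow$), suppose White loses graph-Hex. By Lemma~\ref{lem:determinacy} Black has a winning strategy $\tau$, and Black mirrors it symmetrically. Each White gadget wall is read as a White colouring, and Black answers with the first wall of the vertex $\tau$ prescribes, which suffices by Lemma~\ref{lem:black}; Black completes the second wall whenever White threatens that vertex. Black otherwise advances its pawn. This costs Black at most $2|V(H)|$ wall turns, which fits its budget. White can then never obtain a passage from $s$ to $t$ through the embedding, so its only route is the safe corridor $Q\to A\to\text{snake}\to B\to$ tail. That route is longer than Black's $A'\to B'\to$ goal route by the $Q\to A$ leg and the tail, at least $4|V(H)|+\gwidth(H)$ cells in total. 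This exceeds Black's wall-turn overhead of at most $2|V(H)|$, plus one for turn order.

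The main obstacle is making this count exact. White's pawn may also use partially built connections, so I must argue that any route through the embedding ending at White's target side passes $t$, and reaching $t$ requires a white Hex path. The interleaving also has to be handled precisely. I would formalise it by the potential ``remaining pawn distance plus remaining forced wall turns'' for each player and compare the two potentials under both movement models, using $d_1$ and $d_2$ from Remark~\ref{rem}. I would first check the tight case where Black commits all $|V(H)|$ answers before moving. If the margins fail, I would lengthen the corridor segments by $O(|V(H)|)$ to fix them, which Lemma~\ref{lem:embedding} and the two-component construction permit.
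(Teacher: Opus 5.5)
The ($\Rightarrow$) direction matches the paper: you simulate $\sigma$ with one wall per white vertex, treat Black's other turns as passes answered by extra colourings, and then win the race. Black's route from $A'$ also includes the $2|V(H)|$-cell leg after $B'$. So Black needs at least $2\gwidth(H)\gheight(H)+4|V(H)|$ turns, against White's at most $\gwidth(H)\gheight(H)+3|V(H)|+2$, and no constant has to be absorbed.

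\textbf{The gap is in ($\Leftarrow$).} Your Black only answers White's walls and otherwise runs. You assume that reaching $t$ requires a white Hex path, but on the board this is false. An uncoloured vertex gadget is an open chamber the pawn can walk through. Black needs two walls to seal it, and even a half-sealed chamber is still passable, which is why Lemma~\ref{lem:black} needs a second wall at all. A winning strategy $\tau$ yields a black $s$--$t$ separator only once the Hex game has been played out. White on the board can stop ``playing Hex'' at any moment and run through whatever is still unsealed.

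Concretely, let $H$ be the path $s$--$v_1$--$v_2$--$t$ with White to move. White loses graph-Hex. But suppose White never places a wall and just walks $Q\to s\to v_1\to v_2\to t$. Your Black then never places a wall either, since there is nothing to answer and no half-sealed vertex to protect, and White wins the race comfortably.

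\textbf{The missing idea is the one the paper uses.} Black treats every non-colouring White turn as a pass and spends that tempo to keep playing $\tau$; extra black stones never hurt Black. It continues until the black vertices form a complete $s$--$t$ separator, places both walls on each of them, and does all of this \emph{before} moving its pawn. This costs at most $2|V(H)|$ Black turns. It is finished before White's pawn, starting at $Q$ with $2|V(H)|$ cells to reach $s$, can enter the embedding. That corridor length exists for exactly this purpose, and your argument never uses it. Front-loading also removes the need to define when White ``threatens'' a half-sealed vertex.

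\textbf{The race count also needs redoing.} Black's route includes the final $2|V(H)|$ cells after $B'$. So White's safe route is longer only by $|QA|+|\text{tail}|-2|V(H)|\ge 2|V(H)|+\gwidth(H)$, not by $4|V(H)|+\gwidth(H)$ as you wrote. This still exceeds Black's at most $2|V(H)|$ wall turns plus one for turn order, so the conclusion survives, but with much less slack than you claimed.
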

\begin{proof}
Assume first that White has a winning strategy in graph-Hex. Then White can follow the corresponding strategy inside the \textsc{Quoridor} embedding by placing one wall for each vertex that the graph-Hex strategy colors white. Thus, regardless of Black's strategy, White can secure an $s$--$t$ path that cannot later be destroyed within at most $|V(H)|$ moves. If Black follows the graph-Hex simulation, this is exactly White's winning strategy in graph-Hex. If Black deviates from the simulation---in particular if Black races along its corridor instead of colouring a vertex---then Black effectively skips moves in the simulated graph-Hex game; since an extra tempo never hurts the player who receives it in graph-Hex, White's winning strategy remains winning, and the path is secured no later.
During these turns, Black may instead move along its corridor. The black pawn starts at $A'$, from which reaching its target side requires at least $d_2 + 2\cdot|V(H)|$ moves (the path from $A'$ to $B'$, followed by the finale straight extension of length $2\cdot|V(H)|$), regardless of whether diagonal movements are allowed. Counting White's own turns instead, securing the path costs at most $|V(H)|$ wall placements, while moving the white pawn from its start $Q$ through $s$ to $t$ and on to White's target side costs at most $2|V(H)|+\gwidth(H)\times\gheight(H)+2$ moves, since a simple $s$--$t$ path inside the embedding has length less than $\gwidth(H)\times\gheight(H)$. White therefore reaches its target side in at most
$$
|V(H)|+\bigl(2\cdot|V(H)|+\gwidth(H)\times\gheight(H)+2\bigr)=\gwidth(H)\times\gheight(H)+3\,|V(H)|+2
$$
of its own turns, whereas by Remark~\ref{rem} the black pawn needs at least $d_2+2\,|V(H)|\ge 2\,\gwidth(H)\times\gheight(H)+4\,|V(H)|$ turns. Since $\gwidth(H)\times\gheight(H)\ge|V(H)|$ (the $|V(H)|$ vertices occupy distinct cells of the embedding),
$$
\gwidth(H)\times\gheight(H)+3\,|V(H)|+2\;<\;2\cdot \bigl(\gwidth(H)\times\gheight(H)\bigr)+4\,|V(H)|,
$$
so White reaches its target side strictly first.
Moreover, Black cannot catch or confront White, cannot obstruct the secured $s$--$t$ path, and cannot reach Black's target side before White reaches White's target side. Therefore, once the simulated winning path has been secured, White follows it and wins the \textsc{Quoridor} game.

\medskip
Suppose for contradiction that White has a winning strategy in the \textsc{Quoridor} instance, but not in the graph-Hex instance. Since graph-Hex admits no draws and is monotone, Black has a winning strategy in graph-Hex. Consider the following \textsc{Quoridor} strategy for Black.
First, Black simulates its winning graph-Hex strategy by placing one wall in each vertex gadget corresponding to a vertex that Black colors in graph-Hex, thereby partially blocking that vertex. This phase lasts for at most $|V(H)|$ turns. Then Black spends at most another $|V(H)|$ turns placing the second wall in each partially blocked vertex gadget, so that every vertex colored by Black in the simulation becomes completely closed. After this, Black moves along the safe corridor toward Black's target side.
We claim that White's winning strategy cannot rely on the safe corridor. Indeed, even if White spent all of the first $2\cdot |V(H)|$ turns moving along that corridor while Black completes the simulated blocking phase, Black would still be closer to its own target side. Hence the safe corridor cannot provide White with a winning route.
Against the strategy described above, after at most $2\cdot |V(H)|$ turns Black has completely closed all vertex gadgets corresponding to vertices that Black can force in the graph-Hex simulation. Since White has no winning strategy in graph-Hex, this eliminates every possible $s$--$t$ path through the embedded graph.
During these same $2\cdot |V(H)|$ turns, White can at most reach the entrance of the embedded graph near $s$, even if diagonal movements are allowed. Once White reaches that point, however, no $s$--$t$ path remains available inside the embedding. White also cannot exploit vertices that were only partially blocked, because Black's second phase closes them before White reaches $s$. Moreover, once a vertex has been partially blocked, White cannot prevent Black from placing the second wall and closing it completely, even if White still has walls available.
Therefore, if White loses the simulated graph-Hex game, the only surviving route is the safe corridor, and that route is losing for White by construction. This contradicts the assumption that White has a winning strategy in \textsc{Quoridor}. Hence any winning strategy for White in the \textsc{Quoridor} instance must pass through a secured $s$--$t$ path in the embedded graph, which yields a winning strategy for White in graph-Hex.
\end{proof}

\begin{corollary}
$\QWIN$ and $\textsc{Win(Maze Attack)}$ are PSPACE-hard.
\end{corollary}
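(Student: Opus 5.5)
The corollary should follow from Theorem~\ref{thm:reisch} and Theorem~\ref{thm:main} by a standard polynomial-time many-one reduction. Given an instance $\langle H,s,t,\chi\rangle$ of \GHex, I would first apply Lemma~\ref{lem:degree}. This yields an equivalent position with no black vertices, maximum degree three, and terminals on the outer face. Next I compute the orthogonal drawing of Fact~\ref{fact:ortho} and the board embedding of Lemma~\ref{lem:embedding}. Finally I attach the white and black components, the pawns, the wall budgets and the player to move as described at the start of Section~\ref{sec:reduction}. Each step runs in time polynomial in $|V(H)|$. The safe corridors and the snake have length $O(\gwidth(H)\cdot\gheight(H)+|V(H)|)$, so the whole board has polynomial size. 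The wall budgets $|V(H)|$ and $2|V(H)|$ are at most linear in the board dimensions, and can be padded as the paper notes if a specific linear bound is required.

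Second, I must check that the output really belongs to the domain of the target problem, i.e.\ that it is a \emph{valid configuration} of \textsc{Quoridor} (resp.\ \textsc{Maze Attack}). There are two conditions. The placed walls are pairwise non-crossing by construction of the gadgets. Each pawn can reach its target side thanks to the unblockable safe corridor of its component (Lemma~\ref{lem:corridor} together with Lemma~\ref{lem:legal}). There is one pawn per player and a single universal wall type, as both games require.

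Third, correctness is exactly Theorem~\ref{thm:main}: White wins the graph-Hex position if and only if White wins the constructed \textsc{Quoridor} position. Lemma~\ref{lem:nojump} shows the pawns never meet, so the same equivalence holds for \textsc{Maze Attack}, whose only difference is the confrontation rule. Hence $\langle H,s,t,\chi\rangle\in$ \GHex if and only if the image lies in $\QWIN$ (resp.\ $\textsc{Win(Maze Attack)}$). PSPACE-hardness then transfers from Theorem~\ref{thm:reisch}, since polynomial-time many-one reductions compose with the degree-reduction step.

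The main obstacle I expect is purely bookkeeping. One must confirm that the player to move and the ``White moves first'' convention match across the two games; the construction copies the player to move. One must also check that the degree-three restriction in the source language is used only to make the vertex gadgets of Figure~\ref{fig:free} applicable. The substantive game-theoretic content is already discharged by Theorem~\ref{thm:main}, which I take as given.
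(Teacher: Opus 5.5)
Your proposal is correct and follows the same route as the paper, whose proof simply combines the polynomial-time construction of a valid configuration (Lemma~\ref{lem:embedding}) with the winner preservation of Theorem~\ref{thm:main}, using Lemma~\ref{lem:nojump} for \textsc{Maze Attack}. Your extra bookkeeping (validity via the safe corridors, wall budgets, player to move) only makes explicit what the paper leaves implicit. One small detail is missing from both your write-up and the paper: an input that violates the degree-three condition should be detected in polynomial time and mapped to a fixed losing instance, so that the equivalence holds on all inputs.
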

\begin{proof}
The reduction maps every \GHex position $\langle H,s,t,\chi\rangle$ to a valid configuration in polynomial time (Lemma~\ref{lem:embedding}), and by Theorem~\ref{thm:main} it preserves the winner for both \textsc{Quoridor} and \textsc{Maze Attack}.
\end{proof}
\begin{corollary}
\textsc{Win(Pinko Pallino)} is PSPACE-hard.
\end{corollary}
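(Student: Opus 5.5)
The plan is to reuse the Quoridor construction of Section~\ref{sec:reduction} unchanged and argue that Theorem~\ref{thm:main} also holds when pawns may step diagonally. Composing with Reisch's Theorem~\ref{thm:reisch} and the polynomial-time embedding of Lemma~\ref{lem:embedding} then gives PSPACE-hardness, just as in the previous corollary. Concretely, the reduction maps $\langle H,s,t,\chi\rangle$ to the same board, pawn positions, wall budgets and player to move. It remains to check three things: the configuration is still a valid \textsc{Pinko Pallino} configuration, the gadgets still behave as claimed, and the race-counting inequalities still hold.

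\textbf{Step 1: gadgets.} Lemmas~\ref{lem:corridor}, \ref{lem:white} and~\ref{lem:black} speak only about which walls can legally be placed, and wall mechanics are identical across the games. The one place diagonals could matter is connectivity: a diagonal step might slip past the corner where two walls meet. I would take the convention, standard for such variants, that a diagonal move between two cells is allowed only if at least one of the two orthogonal two-step routes between them is free of walls. Under this convention, a region sealed by walls into separate components stays separated: any diagonal step between two regions would require an orthogonal path between them, which the walls forbid. So the partition used in Lemma~\ref{lem:black} still disconnects the ports of a black-coloured chamber, and the two pawns remain in distinct components. Lemma~\ref{lem:nojump} therefore applies and makes the unknown confrontation rule irrelevant. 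Lemma~\ref{lem:legal} is unaffected.

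\textbf{Step 2: counting.} In the forward direction I would redo the estimate with Black's distance $d_2+2|V(H)|$ taken from Remark~\ref{rem}. The proof of Theorem~\ref{thm:main} already uses $d_2$ and notes that Black's path length holds ``regardless of whether diagonal movements are allowed''. White's bound is an upper bound on moves, and diagonals can only shorten White's route. So the inequality $\gwidth(H)\gheight(H)+3|V(H)|+2<2\gwidth(H)\gheight(H)+4|V(H)|$ carries over verbatim. In the reverse direction, Black's blocking phase lasts at most $2|V(H)|$ turns. I must check that within this time White, even moving diagonally, cannot get from $Q$ to $s$ and exploit the embedding. The segments $Q\to s$ run straight along a one-cell-wide corridor, where diagonals give no gain, so the distance of $2|V(H)|$ still applies. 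I must also check that the safe corridor is still losing for White with diagonals: both White's tail and Black's copy have their connectors shortened equally (from $d_1$ to $d_2$), and White's corridor still carries the extra return leg of length at least $2|V(H)|+\gwidth(H)$.

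\textbf{Main obstacle.} The delicate point is Step 1: diagonal moves around wall corners, especially at the rung ends of staggered corridors, at bends, and inside the partially sealed vertex chamber. If the source rules permitted diagonal squeezing between walls that meet only at a corner, I would fall back on checking Figure~\ref{fig:Bbreak} directly and arguing that the black walls meet along whole segments rather than only at corners, so that no diagonal leak exists.
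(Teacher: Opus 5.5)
Your proposal is correct and takes essentially the paper's route: the construction is reused unchanged, the confrontation rule is made irrelevant because the pawns never meet (Lemma~\ref{lem:nojump}), and the $d_2$ bound of Remark~\ref{rem} shows that diagonals give neither player a faster race. Your one addition is the explicit rule that a diagonal step cannot slip between walls that meet only at a corner, which keeps connectivity the same as with orthogonal moves and which the paper's three-line proof assumes silently; your fallback is not available, since walls may not overlap and so cannot meet ``along whole segments'', and the argument therefore rests on that convention.
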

\begin{proof}
\textsc{Pinko Pallino} differs from \textsc{Quoridor} only in admitting diagonal moves. The reduction is unchanged: the proof of Theorem~\ref{thm:main} never relies on jumps, and by Remark~\ref{rem} (the bound on $d_2$) diagonal movement gives neither player a faster route. Hence \GHex $\,$reduces to \textsc{Win(Pinko Pallino)} as well, which is therefore PSPACE-hard.
\end{proof}
\begin{corollary}
\textsc{Win(Blockade)} is PSPACE-hard.
\end{corollary}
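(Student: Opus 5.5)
We reduce from \GHex{} by reusing the construction of Section~\ref{sec:reduction} and changing only the game-dependent parts: the number of pawns, the split of wall budgets into vertical and horizontal walls, and the move-and-place turn structure.

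\textbf{Second pawns.} Each player's second pawn is placed in its own sealed, unblockable corridor, disjoint from everything else, so that it is useless. Concretely, it sits in a closed pocket whose route to its target side is at least as long as the primary pawn's route in the same player's component. We keep it away from the vertex gadgets so that it can never influence the embedded graph. It retains a route to its target side, so the position stays valid and Lemma~\ref{lem:legal} still applies. By the argument of Lemma~\ref{lem:nojump}, all four pawns are in distinct components, so the jump rule never triggers.

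\textbf{Directional walls.} Every gadget wall in Figures~\ref{fig:fix}--\ref{fig:Bbreak} is either horizontal or vertical. We therefore give White enough walls of the orientation its colouring wall requires, and Black enough walls of the orientations its two sealing walls require. If needed, we rotate the vertex gadget so that the white wall and the black walls have fixed orientations; this only requires checking the figures.

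\textbf{Move-and-place.} This is the step I expect to be the main obstacle. In \textsc{Blockade} each turn consists of a move and then a wall placement, so both players race and colour simultaneously, rather than trading tempo between the two as in Theorem~\ref{thm:main}. The intended fix is to let each player burn their mandatory moves on the useless second pawn, or to give each primary pawn a long corridor, and then re-derive the race inequalities with $d_3$ from Remark~\ref{rem}. Because every turn now places one wall, a graph-Hex round corresponds to exactly one turn. White secures its path within $|V(H)|$ turns while its pawn advances at most two cells per turn. Black needs $2|V(H)|$ turns to fully seal its vertices, and during that time White's pawn has not yet passed $s$, since the $Q\to s$ detour has length $2|V(H)|$. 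We may need to double the detour lengths to $4|V(H)|$ to account for Blockade's two-cell moves.

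\textbf{Race inequalities.} We redo the two counts in the proof of Theorem~\ref{thm:main} with moves of length $\le 2$. White finishes within roughly $\tfrac12(\gwidth(H)\gheight(H)+O(|V(H)|))$ turns. Black needs at least $d_3+|V(H)|\ge \gwidth(H)\gheight(H)+2|V(H)|$ turns, so White wins, using the bound on $d_3$ in Remark~\ref{rem}. For the converse direction, White's safe-corridor route stays longer than Black's copy, again using the extra tail of length at least $2|V(H)|+\gwidth(H)$, halved.

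With the construction polynomial (Lemma~\ref{lem:embedding}), PSPACE-hardness follows from Theorem~\ref{thm:reisch}.
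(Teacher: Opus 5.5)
Your proposal follows the paper's proof essentially step for step. You park each second pawn in a private, unblockable copy of its owner's labyrinth; for White this copy omits the embedding, so it is no faster than the losing safe corridor. You lengthen the corridor from $Q$ to $s$ to $4|V(H)|$, so that simultaneous move-and-place still lets Black place both sealing walls before White's pawn reaches $s$. You restrict each player's wall supply to the orientation used for colouring, and you redo the race with the $d_3$ bound of Remark~\ref{rem}.

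Two of the points you hedge on should be stated firmly. First, the doubling is necessary, since a $2|V(H)|$-cell detour takes only $|V(H)|$ Blockade moves. Second, White's second pawn must be compared against White's safe-corridor route, not the route through the embedding; otherwise the converse direction fails.
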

\begin{proof}
We adapt the construction of Theorem~\ref{thm:main}. 

Here each player controls two pawns. We place each player's \emph{second} pawn in a private copy of that player's labyrinth; for White this copy omits the graph embedding, so the second white pawn has only the long snake corridor available and can never reach White's target side before the pawn that traverses the embedding, while the second black pawn merely runs a copy of Black's snake. Consequently only one pawn per player is relevant, and the analysis reduces to the one-pawn case. We use the same corridor construction as for \QWIN $\,$ but double the length of the relevant corridors; in particular, the corridor leading White to the entrance $s$ now has length $4\cdot|V(H)|$. Since a \textsc{Blockade} turn moves a pawn and, when possible, places a wall, and a move covers at most two cells (the bound on $d_3$ in Remark~\ref{rem}), these $2\cdot|V(H)|$ turns suffice for the players both to advance along the corridor and to carry out the simulated colouring before the white pawn reaches $s$. Finally, since \textsc{Blockade} distinguishes vertical and horizontal walls, each player is given only $2\cdot|V(H)|$ walls of horizontal type, which are the ones used to colour vertices. With these adjustments the proof proceeds as in the previous cases.
\end{proof}

\begin{lemma}[Membership]\label{lem:pspace}
For each of the four games $\mathsf{Game}$, the language $\textsc{Win}(\mathsf{Game})$ is in PSPACE.
\end{lemma}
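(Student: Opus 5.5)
The natural approach is a depth-first evaluation of the game tree, and the whole point is to bound its depth polynomially. The key observation, already stressed in the introduction, is that walls are permanent while the wall budget of each player is part of the position and decreases by one at every placement. A play therefore splits into at most $W+1$ maximal \emph{wall-free segments}, where $W$ is the total number of walls available; in the generalized games $W$ is polynomial in $m+n$ (linear by the generalization of Section~\ref{sec:generalization}). Inside a wall-free segment only pawns move, and the set of pawn configurations is at most $(mn)^{2p}\cdot 2$ (positions times player to move), with $p\le 2$. So the only thing that could make the game unbounded is pawn shuffling with no wall placed, and that has to be controlled.

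\emph{Step 1 (encoding).} A position is described by the board size, the set of placed walls (at most $O(mn)$ grid segments), the pawn coordinates, the remaining budgets, and the player to move. This takes $O(mn\log(mn))$ bits, polynomial in the input. Validity and the legality of a single move (no crossing or overlapping, and the non-blocking constraint, checked by BFS from each pawn to its target side) are polynomial-time checks. The same holds for each game's movement and confrontation rules, and for \textsc{Blockade}'s move-then-place turns.

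\emph{Step 2 (making play bounded).} For each game I would adopt the standard convention for such games: a repeated position is either a draw or forbidden. Equivalently, evaluate ``White can force a win'' on the finite game graph, where infinite plays do not count as White wins. With Black's target as an absorbing loss for White, the set of positions from which White forces a win is the attractor of White's winning positions. Within a fixed wall configuration and budget, the attractor stabilizes after at most $N=O((mn)^{4})$ pawn moves. Hence White wins from a position if and only if White can force a win within $D=(W+1)(N+1)$ plies, since each wall-free stretch can be cut to at most $N$ plies. Both $D$ and $\log N$ are polynomial.

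\emph{Step 3 (algorithm).} I would run a recursive alternating search of depth $D$. At each level, enumerate the polynomially many legal moves, store only the current position and a move index, and return the value computed by $\exists/\forall$ over the children. Space is $O(D\cdot\text{poly}(m,n))$, which is polynomial. Equivalently, this is an alternating polynomial-time algorithm, and $\mathrm{APTIME}=\mathrm{PSPACE}$.

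The main obstacle is Step 2. Pawns can move back and forth indefinitely without placing walls, so the naive tree has no polynomial depth bound. The attractor argument needs care about what ``winning strategy'' means for infinite plays. I would first state explicitly that a play in which nobody reaches a target is not a White win. Then I would prove the $N$-ply bound per wall configuration by the standard fact that in a reachability game on a graph of $N$ vertices, a winning player can force the target within $N$ steps (memoryless attractor strategy). The final step is concatenating these bounds across the at most $W$ irreversible wall placements.
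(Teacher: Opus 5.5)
Your proposal is correct and follows essentially the same route as the paper. Wall permanence splits a play into at most linearly many wall-free stretches, and each stretch visits only polynomially many pawn configurations. So White wins if and only if White wins within polynomially many plies, and a bounded-depth game-tree search then decides this in polynomial space. Your attractor argument (a memoryless strategy whose rank strictly decreases) is also the cleaner justification of the paper's claim that White may be assumed never to repeat a configuration. The paper's ``Black repeats its earlier replies'' argument tacitly treats White's strategy as positional.
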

\begin{proof}
A position is describable in polynomial space and its legal moves are computable in polynomial time, so it suffices to bound the length of the plays we must examine. Walls are a finite, monotone resource: once placed, a wall is never removed, so along any play the number of wall placements is at most the total wall budget, which is linear in $n$.

We claim that if White has a winning strategy, then White has one in which no configuration ever repeats. Suppose not, and let a winning strategy of White revisit some configuration with White to move. From the second occurrence onward Black could repeat the replies it used after the first occurrence, returning the play to that configuration again and again; the play would then cycle forever and White would never reach its target side, contradicting that the strategy wins. Hence we may restrict attention to White strategies that repeat no configuration.

Configurations sharing a fixed set of walls are only polynomially many---they are determined by the pawn placements and the player to move---and the wall set changes at most linearly many times along a play, so a non-repeating play has polynomial length. Consequently, whether White can force a win can be decided by evaluating a game tree of plays of polynomial length whose positions and legal moves are polynomial-time computable, which lies in PSPACE \cite{hearndemaine2009}. The argument uses only the wall mechanics shared by the four games, so it applies to each of them.
\end{proof}

\begin{theorem}\label{thm:complete}
For each of the four games $\mathsf{Game}$, $\textsc{Win}(\mathsf{Game})$ is PSPACE-complete.
\end{theorem}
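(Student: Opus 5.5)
The plan is to combine the membership result with the hardness corollaries already established. Membership is supplied directly by Lemma~\ref{lem:pspace}: for each $\mathsf{Game}\in\{\textsc{Quoridor},\textsc{Maze Attack},\textsc{Pinko Pallino},\textsc{Blockade}\}$, the language $\textsc{Win}(\mathsf{Game})$ lies in PSPACE. For hardness, we start from Theorem~\ref{thm:reisch}: \GHex{} is PSPACE-complete. We then chain it with the three corollaries that follow Theorem~\ref{thm:main}. These give PSPACE-hardness of \QWIN{} and $\textsc{Win(Maze Attack)}$ together, of $\textsc{Win(Pinko Pallino)}$, and of $\textsc{Win(Blockade)}$.

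The step to make explicit is that these corollaries really yield polynomial-time many-one reductions into each target language. The reduction is the composition of three maps. First comes the preprocessing of Lemma~\ref{lem:degree}, which removes black vertices, handles terminals of high degree, and splits high-degree white vertices; it preserves the winner and runs in polynomial time. Second comes the orthogonal drawing of Fact~\ref{fact:ortho} together with the gadget embedding of Lemma~\ref{lem:embedding}, producing a board of size $O(|V(H)|)\times O(|V(H)|)$. Third, we add the safe corridors, pawns and wall budgets described in Section~\ref{sec:reduction}; all of their lengths are polynomial in $\gwidth(H)$, $\gheight(H)$ and $|V(H)|$.

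Each map is polynomial, so the composition is too. Correctness is given by Theorem~\ref{thm:main} and its variants, which state that White wins the graph-Hex position exactly when White wins the board position. Lemma~\ref{lem:nojump} ensures the confrontation rule is immaterial, and the Blockade adaptation handles the second pawn and the directional walls.

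Two bookkeeping points must be checked so that the output is indeed a member of the domain of $\textsc{Win}(\mathsf{Game})$. The produced position must be a \emph{valid configuration} of $\mathsf{Game}$, meaning its walls are pairwise non-crossing and every pawn can reach its target side. The safe corridors and Lemma~\ref{lem:legal} guarantee the reachability part. The wall budgets must also be linearly bounded in $n+m$, as the generalization requires. This follows from the remark that the board can be padded between the two components, so the $|V(H)|$ and $2|V(H)|$ budgets are linear in the board dimensions. Combining membership and hardness gives PSPACE-completeness for all four games.

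The main obstacle is not new reasoning but checking that the hardness direction for \textsc{Blockade} inherits exactly the same polynomial bounds and validity conditions. For that game the wall types are split and the corridor lengths are doubled. The first thing to verify is that the doubled corridors still give polynomial board dimensions, and that restricting the budget to horizontal walls keeps the configuration valid for \textsc{Blockade}.
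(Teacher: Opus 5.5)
Your proposal is correct and follows the paper's proof: membership comes from Lemma~\ref{lem:pspace}, and hardness comes from the three corollaries built on Theorem~\ref{thm:main} and the reduction from \GHex{}. The bookkeeping you add (composing the polynomial-time maps of Lemmas~\ref{lem:degree} and~\ref{lem:embedding} with the corridor construction, validity via Lemma~\ref{lem:legal}, and wall budgets linear in the board size) only makes explicit what the paper's two-line proof leaves implicit; padding is not even needed for the budget bound, since the safe corridor already gives the board side length at least $2|V(H)|$.
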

\begin{proof}
PSPACE-hardness is established by the corollaries above---Theorem~\ref{thm:main} for \textsc{Quoridor} and \textsc{Maze Attack}, and the corresponding corollaries for \textsc{Pinko Pallino} and \textsc{Blockade}---and membership in PSPACE by Lemma~\ref{lem:pspace}.
\end{proof}

\begin{remark}[The \emph{race-and-wall} family]\label{rem:family}
The reduction settles much more than the four games of Table~\ref{tab:games}. Its only
working ingredients are the race to the opposite side and costant-length walls that cannot cross or
overlap and, once placed, never move: corridor non-blockability is purely geometric
(Lemma~\ref{lem:corridor}), the two pawns sit in separate components and never meet
(Lemma~\ref{lem:nojump}), permanence yields both the irreversible commitment (Lemma~\ref{lem:black})
and the polynomial bound on play length (Lemma~\ref{lem:pspace}), and the non-blocking constraint does not affect the validity of any move in the constructed instances. Essentially, every \textit{race-and-wall} game is PSPACE-complete: whatever the pawn-movement rule (orthogonal or diagonal, any
fixed number of cells per move), whatever the confrontation rule, whatever the number of pawns, with universal or directional walls, and whether or not a player may seal a pawn off from its target side completely.\\The four commercial titles are merely the instances that
happen to have a name.
\end{remark}

\section*{Acknowledgments}
We thank Mirko Giacchini and Alessandro Panconesi for their valuable advice. We are
especially grateful to Benjamin Rin for discussions at FUN~2026 that greatly improved this paper.

\bibliographystyle{alpha}
\bibliography{refs}

\end{document}